\newcommand{\old}[1]{{}}
\newtheorem{theorem}{Theorem}[section]
\newtheorem{corollary}[theorem]{Corollary}
\newtheorem{lemma}[theorem]{Lemma}
\newtheorem{claim}[theorem]{Claim}
\newcommand{\mono}{{{monochromatic }}}
\newcommand{\T}{{{\cal{T}}}}
\newcommand{\N}{{{\cal{N}}}}
\title{Planar Bichromatic Bottleneck Spanning Trees \footnote{This work was partially supported by Grant 2016116 from the United States -- Israel Binational Science Foundation. Mitchell was partially supported by NSF (CCF-1526406).}
} 
\author{ A. Karim Abu-Affash\thanks{Software Engineering Department, Shamoon College of Engineering, Beer-Sheva 84100, Israel, {\tt abuaa1@sce.ac.il}.}
\and 
Sujoy Bhore\thanks{Department of Computer Science, Ben-Gurion University, Beer-Sheva 84105, Israel, {\tt sujoy.bhore@gmail.com}. }
\and 
Paz Carmi\thanks{Department of Computer Science, Ben-Gurion University, Beer-Sheva 84105, Israel, {\tt carmip@cs.bgu.ac.il}. }
\and
Joseph S. B. Mitchell\thanks{Department of Applied Mathematics and Statistics, Stony Brook University, Stony Brook, NY, USA, {\tt jsbm@ams.stonybrook.edu}.}
}
\begin{document}

\maketitle

\begin{abstract}

Given a set $P$ of $n$ red and blue points in the plane, a \emph{planar bichromatic spanning tree} of $P$ is a spanning tree of $P$, such that each edge connects between a red and a blue point, and no two edges intersect. 
In the bottleneck planar bichromatic spanning tree problem, the goal is to find a planar bichromatic spanning tree $T$, such that the length of the longest edge in $T$ is minimized. 
In this paper, we show that this problem is NP-hard for points in general position. Moreover, we present a polynomial-time $(8\sqrt{2})$-approximation algorithm, by showing that any bichromatic spanning tree of bottleneck $\lambda$ can be converted to a planar bichromatic spanning tree of bottleneck at most $8\sqrt{2}\lambda$.
\end{abstract}


\section{Introduction}

Let $P$ be a bi-colored set of red and blue points in the plane and let $n=|P|$. A \emph{bichromatic spanning tree} of $P$ is a spanning tree of $P$ whose edges are straight-line segments connecting between points of different colors. A spanning tree is {\em planar} if its edges do not cross each other.
Borgelt et al.~\cite{Borgelt09} studied the problem of computing a minimum-weight planar bichromatic spanning tree, and showed that the problem is NP-hard. Moreover, for points in general position, they gave an $O(\sqrt{n})$-approximation algorithm, and for points in convex position, they gave an exact cubic-time algorithm. 
Biniaz et al.~\cite{Biniaz19} studied the problem of computing a maximum-weight planar bichromatic spanning tree and gave a $(1/4)$-approximation algorithm for the problem.
 
Algorithmic problems on bichromatic geometric input have appeared in many problems, including, e.g.,
trees~\cite{Abellanas99,Biniaz18,Borgelt09}, matchings~\cite{BiniazMS14,Dumitrescu01}, and partitionings~\cite{Dumitrescu02}. 
Often the bichromatic input is referred to as ``red-blue'' input, e.g. in red-blue intersection~\cite{Agarwal90,Mairson88}, red-blue separation~\cite{Arora04,BoissonnatCDUY00,DemaineEHILMOW05,EverettRK96}, and red-blue connection problems~\cite{Agarwal91,Atallah01}. For a survey of many geometric problems on bichromatic (red-blue) points, see Kaneko and Kano~\cite{Kaneko03}. 

Many of the structures studied in computational geometry are planar,
including minimum spanning trees, minimum weight matchings, Delaunay/Voronoi diagrams, etc.
Therefore, the planarity requirement is quite natural, and indeed many researchers have considered geometric problems dealing with crossing-free configurations in the plane; see, e.g.~\cite{Abu-Affash15,Abu-Affash14,Aichholzer08,Alon93,Aloupis10}.

In this paper, we study the problem of computing a bottleneck planar bichromatic spanning tree of $P$, in which we seek a planar bichromatic spanning tree that minimizes bottleneck, i.e., the length of the longest edge. 
To the best of our knowledge, this problem has not been considered before.

\vspace{0.2cm}
\noindent
\textbf{Our results.}
In Section~\ref{sec:hardness}, we prove that the problem of computing a bottleneck planar bichromatic spanning tree is NP-hard. 
Our proof is based on a reduction from the planar 3-SAT problem, and is influenced by the proof of Borgelt et al.~\cite{Borgelt09}.
As a corollary we obtain that the problem does not admit a PTAS.
Next, in Section~\ref{sec:approx}, we present a polynomial-time algorithm for computing a planar bichromatic spanning tree of bottleneck at most $8\sqrt{2}$ times the bottleneck of a bottleneck planar bichromatic spanning tree. 
We first compute a bottleneck bichromatic spanning tree having bottleneck $\lambda$ that may have crossings. Then, we use the length $\lambda$ to define a partition of the plane into convex cells, and to partition $P$ into subsets according to these cells. Next, we construct planar bichromatic trees for each subset, and we connect these trees to obtain a planar bichromatic spanning tree of $P$. We show that this tree has a bottleneck at most $8\sqrt{2}\lambda$.

\section{Hardness Proof} \label{sec:hardness}
In this section, we prove the following theorem.
\begin{theorem} \label{thm:hardness}
Let $P$ be a set of $n$ red and blue points in the plane. Then, computing a bottleneck planar bichromatic spanning tree of $P$ is NP-hard.
\end{theorem}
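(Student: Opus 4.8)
The plan is to prove NP-hardness by a polynomial-time reduction from \emph{planar 3-SAT}, following the general strategy of Borgelt et al.~\cite{Borgelt09} but adapting their gadgets so that the reduction controls the \emph{bottleneck} (longest edge) of a planar bichromatic spanning tree rather than its total weight. Given a planar 3-SAT instance $\phi$, I would first fix a target bottleneck value $\lambda$ (say $\lambda=1$ after scaling) and construct a point set $P$, placed along a rectilinear embedding of the variable–clause incidence graph of $\phi$, so that a planar bichromatic spanning tree of bottleneck at most $\lambda$ exists if and only if $\phi$ is satisfiable. The embedding of planar 3-SAT into the plane is standard: variables are laid out on a horizontal line and clauses are connected to their literals by non-crossing wires, which gives the ``skeleton'' along which the gadgets are placed.

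The core of the construction is a collection of \emph{wire}, \emph{variable}, and \emph{clause} gadgets built from red and blue points spaced so that the only edges of length $\le \lambda$ are between geometrically adjacent points. First I would design a \textbf{wire gadget}: a chain of alternating red and blue points in which, up to the planarity and bottleneck constraints, there are exactly two ways to span the chain with short edges, encoding the Boolean value \textbf{true} or \textbf{false} propagated along the wire. The key is that planarity forces a consistent ``shift'' of the matching along the whole chain, so the truth value cannot change along a wire. Next I would build a \textbf{variable gadget} as a cyclic arrangement of points whose two locally consistent short-edge spanning configurations correspond to the two truth assignments of the variable, and which feeds that value into each incident wire. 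Finally I would build a \textbf{clause gadget} at each clause: a small configuration of points containing a designated point that can only be connected by a short edge if at least one of the three incident literal wires arrives in the \textbf{true} state, so that the gadget admits a short-edge planar bichromatic spanning subtree exactly when the clause is satisfied.

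The two directions of correctness are then argued as follows. For the forward direction, given a satisfying assignment I set each variable and wire gadget to the configuration encoding its truth value; because every clause is satisfied, each clause gadget can be completed with short edges, and stitching the gadget subtrees together along the embedding (using a few designated connector edges, all of length $\le\lambda$) yields a planar bichromatic spanning tree of bottleneck $\le\lambda$. For the converse, I would show that any planar bichromatic spanning tree of bottleneck $\le\lambda$ is forced, by the point spacing, to use only the ``legal'' short edges inside each gadget; this locks each wire into a single consistent truth value, forces consistency between a variable and all wires carrying its literal, and forces every clause gadget to be satisfied, yielding a satisfying assignment for $\phi$. A counting argument ensures that $P$ has polynomially many points and that the coordinates are polynomially bounded, so the reduction runs in polynomial time.

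The main obstacle I expect is the \textbf{gadget design and its rigidity proof}: I must choose point coordinates so that (i) every intended short edge has length at most $\lambda$, (ii) every unintended edge that would ``cheat'' the encoding has length strictly greater than $\lambda$ or is blocked by a planarity crossing, and (iii) bichromaticity (red–blue alternation) is compatible with all the intended spanning configurations simultaneously. Getting all three to hold at the gadget junctions—especially where three wires meet a clause and where a variable fans out to many wires—is delicate, since the bottleneck constraint is weaker than a total-weight constraint and gives the spanning tree more freedom to deviate. Establishing that no such deviation can beat the bound $\lambda$ unless $\phi$ is satisfiable, i.e. the full rigidity lemma, is the technical heart of the proof. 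As an immediate consequence I would also note that, by making the gap between the ``short'' and ``long'' edge lengths a constant factor, the reduction rules out a PTAS, giving the stated corollary.
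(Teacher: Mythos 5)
Your proposal follows essentially the same route as the paper: a polynomial-time reduction from planar 3-SAT with chain/wire gadgets admitting exactly two short-edge spanning states, cyclic variable gadgets, and clause gadgets with a designated point connectable only via a true literal, adapted from Borgelt et al.\ to the bottleneck setting. You correctly flag the key technical issue---that the bottleneck constraint gives the tree more freedom than the weight constraint---which the paper resolves by adding a red-blue path to the variable gadget to force the special red point to connect to only one side of the circular chain.
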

\begin{proof}
We adapt the proof of Borgelt et al.~\cite{Borgelt09}, making modifications necessary to address the bottleneck version.
For completeness, we explain the ingredients required for the proof.

The proof is based on a reduction from the planar 3-SAT problem. 
Given a 3-CNF formula $F$ with $n$ variables $X=\{x_1,x_2,\ldots,x_n\}$ and $m$ clauses $Y=\{C_1,C_2,\ldots,C_m\}$,
let $G_F=(V,E)$ be the graph of $F$, i.e., $V=X \cup Y$ and $E=\{(x_i,C_j):x_i \mbox{ appears in } C_j\}$.
If $G_F$ is planar,
then $F$ is called a planar 3-CNF formula. The planar 3-SAT problem is to determine whether a given planar 3-CNF formula $F$ is satisfiable; the problem is NP-complete~\cite{Lichtenstein82}.

Let $F$ be a planar 3-SAT formula. We construct, in polynomial time, a set $P$ of red and blue points in the plane, such that $F$ is satisfiable if and only if there exists a planar bichromatic spanning tree of $P$ of bottleneck 1.
Consider the graph $G_F$. It is well known that $G_F$ can be embedded in the plane in polynomial time.

The construction is based on chains of pairs of red and blue points. We call the pairs in the chain \emph{sites}. The distance between the points in each site is less than 1, and the distance between two points of different colors in consecutive sites is exactly 1; see Figure~\ref{fig:chain}(a). Now, for every two consecutive sites, there are two possible edges to connect them: we either connect the blue point of the first site with the red point of the second site, or the other way around. 
Moreover, the chain is constructed in such a way that if we connect the blue point in the leftmost site to the red point in the next site, this forces the choice of connections in one direction along the chain; see Figure~\ref{fig:chain}(b).

\begin{figure}[ht]
   \centering
       \includegraphics[width=0.86\textwidth]{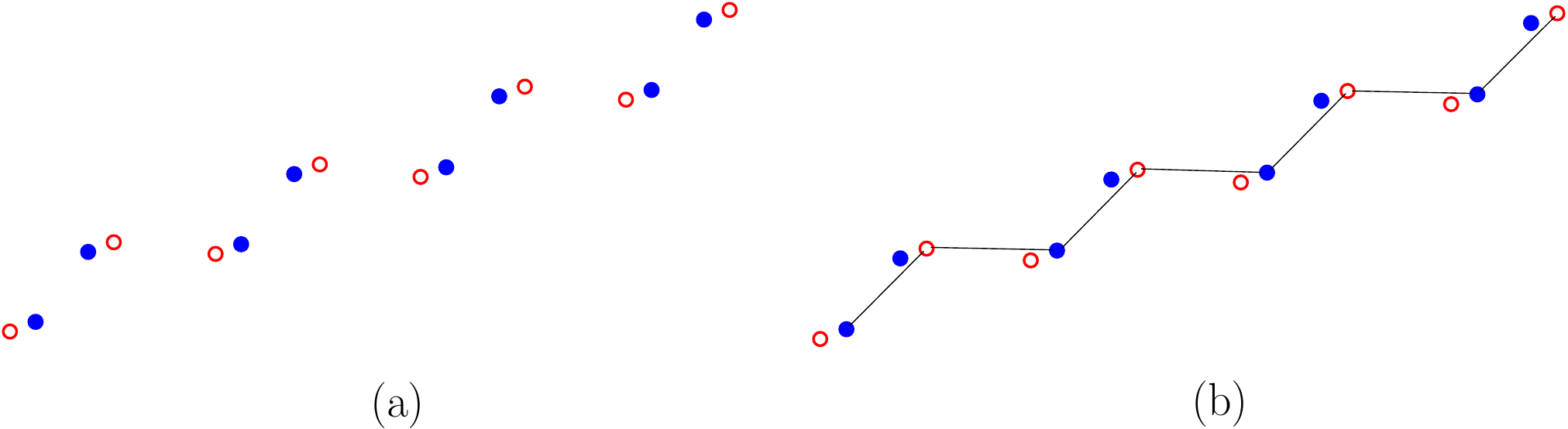}
   \caption{A chain of red-blue sites. }              
   \label{fig:chain}
\end{figure}

\textbf{Variables.} Each variable $x_i \in X$ is represented by a circular chain of sites, 
a special red point $r_i$, and a red-blue path; see Figure~\ref{fig:variables}. The addition of the red-blue path to the variable gadget of~\cite{Borgelt09} is required, since without it the special red point $r_i$ can be connected to both sides of the chain without increasing the bottleneck, which is not the case in the minimum weight version. The red-blue path forces $r_i$ to be connected exactly to one side.
\begin{figure}[ht]
   \centering
       \includegraphics[width=0.76\textwidth]{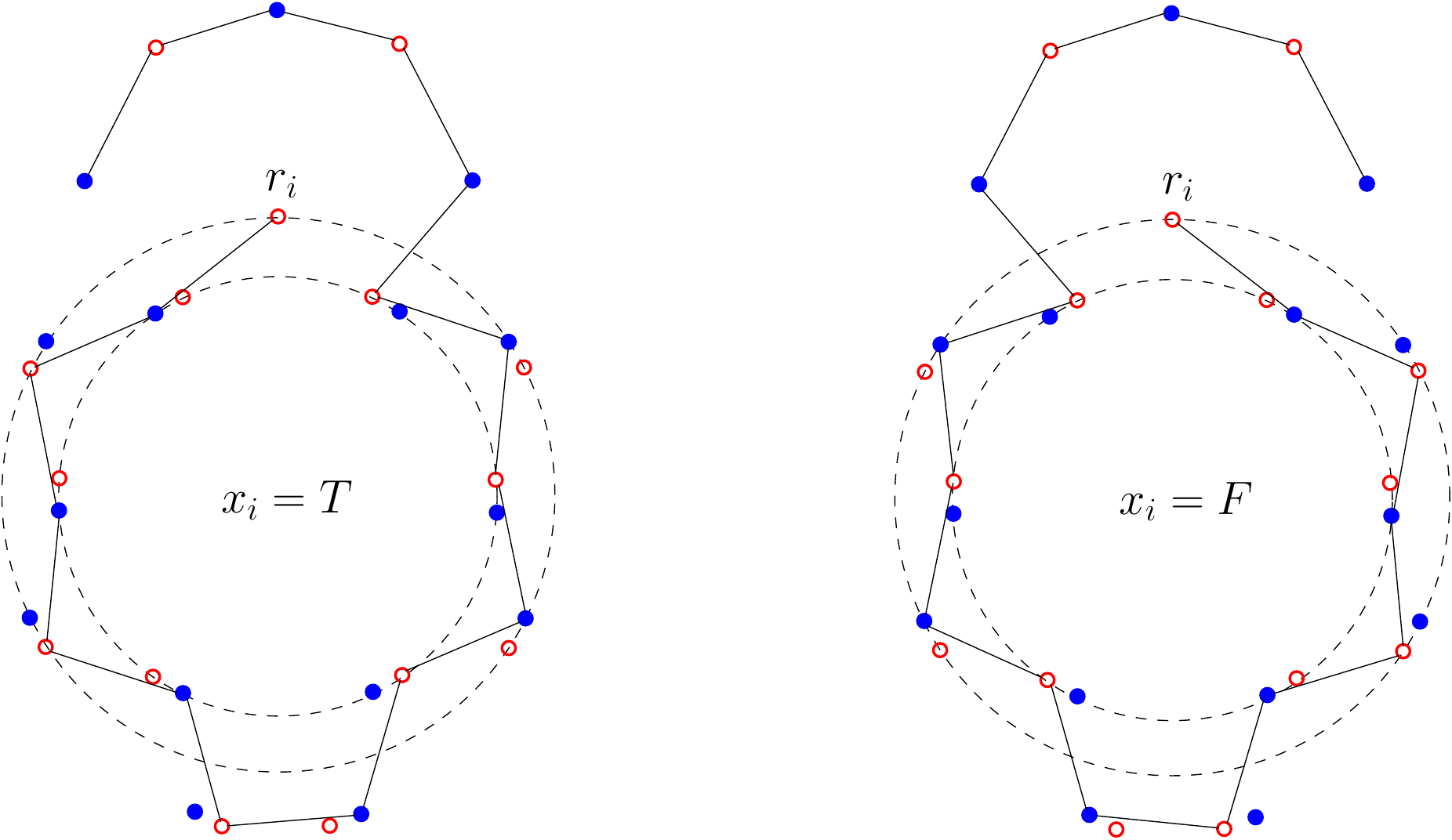}
   \caption{The trees corresponding to the true and the false assignments of $x_i$. }              
   \label{fig:variables}
\end{figure}

The sites on the circular chain are located on two (inner and outer) circles. We locate the points in such a way that the distance between each consecutive sites on the circular chain, the distance between $r_i$ and its neighboring sites in the chain, and the distance between the endpoints of the red-blue path and the chain is exactly 1. Moreover, the points are located such that, there are only two possible optimal trees (i.e., planar bichromatic spanning trees of bottleneck 1) of the points, depending on the connection of $r_i$ to the chain. In both trees, $r_i$ is connected to exactly one site of the chain. We arbitrarily associate one of them with the assignment $x_i=T$, and the other with the assignment $x_i=F$; see Figure~\ref{fig:variables}. Thus, the value of $x_i$ will determine the tree of these points, and vice virsa. Moreover, if $x_i=T$ (resp., $x_i=F$), then the red points on the right (resp., on the left) of the inner circle are free to be connected to points outside the gadget, and the red points on the left (resp., on the right) of the inner circle cannot be connected to points outside the gadget without crossing, and vice versa.

\textbf{Clauses.} Each clause $C_j$ is represented by a single red point $r_j$ and three chains that will be connected to the respective variables of $C_j$; see Figure~\ref{fig:clauses}(a). The distance between $r_j$ and each chain is 1. In any optimal tree, $r_j$ will be connected to at least one of the three chains. However, it cannot be connected to any chain if all the chains are connected to variables that are in the wrong state; see Figure~\ref{fig:clauses}(b).

\begin{figure}[ht]
   \centering
       \includegraphics[width=0.89\textwidth]{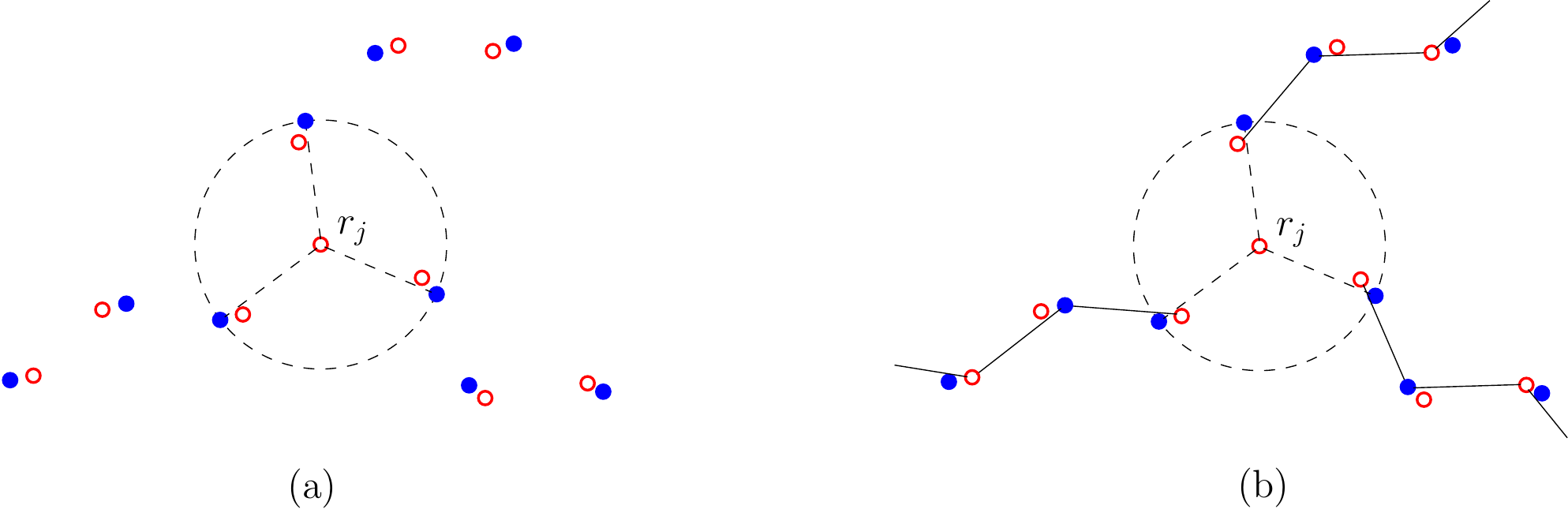}
   \caption{The gadget corresponding to the clause $C_j$. }              
   \label{fig:clauses}
\end{figure}

We connect between the variables and the clauses such that, in any optimal tree, one of the three chains of the clause has to be connected to a red point on the inner circle of the corresponding variable. Assume that $x_i$ appears unnegated in a clause $C_j$ and negated in a clause $C_k$, i.e., $C_j=(x_i\vee\cdot\vee\cdot)$ and $C_k=(\widetilde{x_i}\vee\cdot\vee\cdot)$. We connect the chain of $C_j$ that is respective to $x_i$ to a site on the right of the inner circle of the gadget $x_i$, and we connect the chain of $C_k$ that is respective to $x_i$ to a site on the left of the inner circle of the gadget $x_i$; see Figure~\ref{fig:gadgetN}. This connection ensures that, if $x_i$ is assigned $T$, then the red point on the right of the inner circle of $x_i$ is free to be connected to the chain of $C_j$, and this connection can produce a path through the chain that ends at $r_j$. On the other hand, if $x_i$ is assigned $T$, then the red point on the left of the inner circle of $x_i$ cannot be connected to the chain of $C_k$, which does not allow a connection between the chain and $r_k$. The same argument holds when $x_i$ is assigned $F$.

\begin{figure}[ht]
   \centering
       \includegraphics[width=0.83\textwidth]{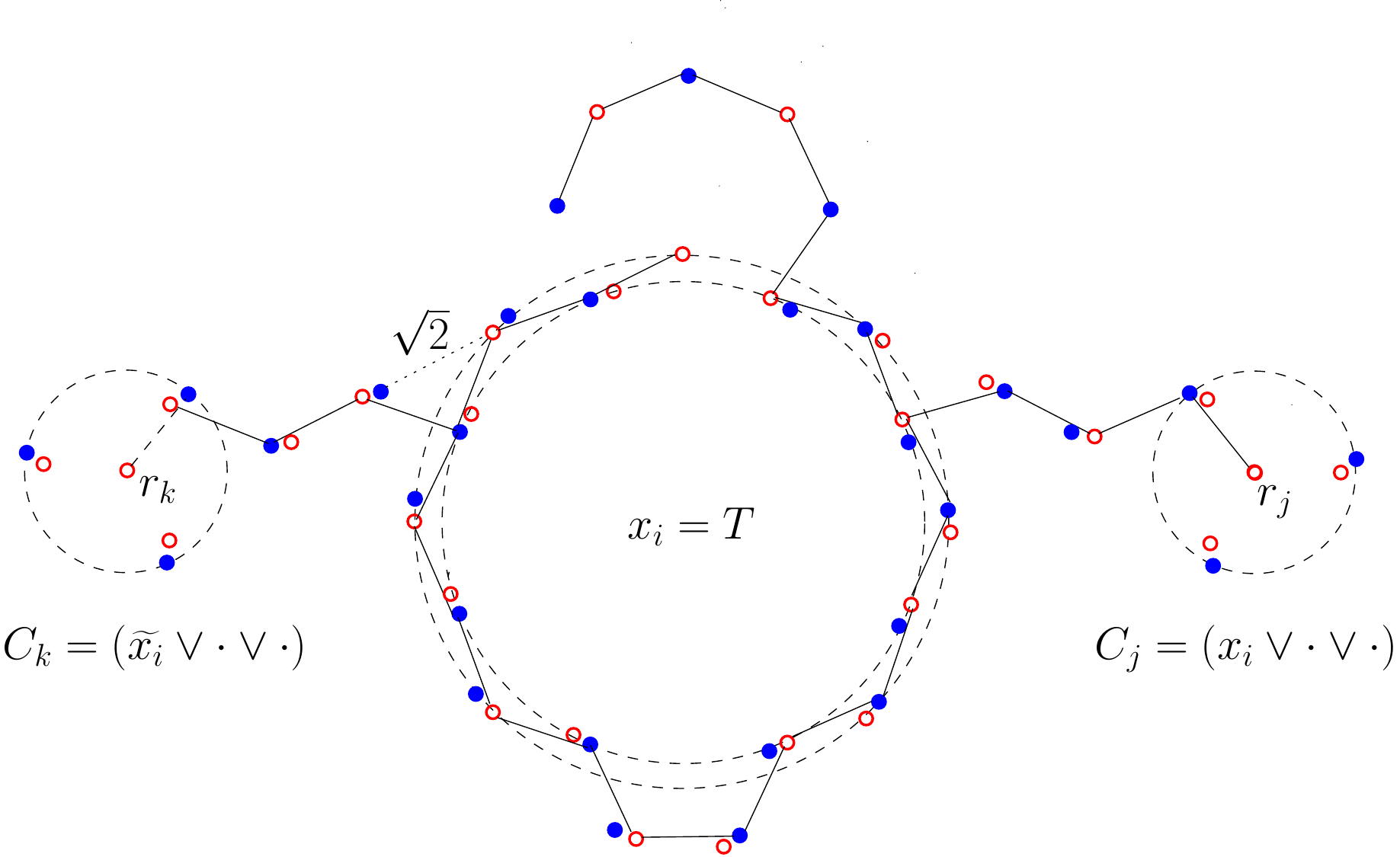}
   \caption{The connection between the variable $x_i$ and the clauses $C_j$ and $C_k$. }              
   \label{fig:gadgetN}
\end{figure}

Finally we need to connect all variables to each other by some fixed part of the tree, because the whole construction needs to be a tree and not a forest. These connections can easily be made using red-blue chains having distance at most 1 between every two consecutive points in the path. Also, we need to make sure that the distance between different parts of the construction is large enough to avoid shortcuts.
\end{proof}

Notice that in the reduction we proved that if the 3-SAT formula is not satisfiable, then any planar bichromatic spanning tree of $P$ has an edge of length greater than 1. Actually, we can push the length of this edge to be closed to $\sqrt{2}$. 
That is, we can draw the connection between each clause and its corresponding variables, such that 
the distance between each chain of the clause and the corresponding site on the inner circle of the variable is 1, and the distance between each chain of the clause and the sites on the outer circle of the variable is at least $\sqrt{2} - \varepsilon $, for any $0 < \varepsilon < \sqrt{2}-1$; see Figure~\ref{fig:gadgetN}.
This implies that the bottleneck planar bichromatic spanning tree problem cannot be approximated within a factor less than $\sqrt{2}$, unless $P=NP$.

\begin{corollary}
The bottleneck planar bichromatic spanning tree problem cannot be approximated within a factor less than $\sqrt{2}$, unless $P=NP$. In particular, there is no PTAS (unless $P=NP$).
\end{corollary}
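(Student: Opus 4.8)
The plan is to convert the reduction of Theorem~\ref{thm:hardness} into a \emph{gap-producing} reduction and then apply a standard gap argument. The key fact is already isolated in the remark preceding the statement: the construction can be tuned so that a satisfiable formula yields a planar bichromatic spanning tree of bottleneck exactly $1$, while an unsatisfiable formula forces every planar bichromatic spanning tree to contain an edge of length at least $\sqrt{2}-\varepsilon$, for any fixed $0<\varepsilon<\sqrt{2}-1$. Thus the reduction maps yes-instances to point sets of optimum bottleneck $1$ and no-instances to point sets of optimum bottleneck at least $\sqrt{2}-\varepsilon$, opening a multiplicative gap of $\sqrt{2}-\varepsilon$ between the two cases.

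First I would re-examine the clause--variable connections and verify the geometric claim quantitatively. In the yes-case the analysis of Theorem~\ref{thm:hardness} already produces a bottleneck-$1$ planar tree. For the no-case I would place each clause chain so that its distance to the corresponding site on the inner circle of the variable is $1$, while its distance to every site on the outer circle is at least $\sqrt{2}-\varepsilon$; then, when the responsible variable is in the wrong state, the only remaining way to attach the stranded clause point $r_j$ is across an outer-circle gap, forcing an edge of length at least $\sqrt{2}-\varepsilon$. The remaining gadgetry (the chains, the circular variable chains, and the inter-variable connecting tree) is left unchanged, all of its distances staying at most $1$, so it contributes nothing above the gap.

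With the gap in hand the inapproximability follows by contradiction. Suppose for some fixed $\rho<\sqrt{2}$ there is a polynomial-time $\rho$-approximation algorithm $\mathcal{A}$. Choose $\varepsilon$ small enough that $\rho<\sqrt{2}-\varepsilon$ and $\varepsilon<\sqrt{2}-1$, and run $\mathcal{A}$ on the point set $P$ produced from a planar $3$-CNF formula $F$. If $F$ is satisfiable, the optimum is $1$, so $\mathcal{A}$ returns a tree of bottleneck at most $\rho<\sqrt{2}-\varepsilon$; if $F$ is unsatisfiable, every planar bichromatic spanning tree, and in particular the tree returned by $\mathcal{A}$, has bottleneck at least $\sqrt{2}-\varepsilon$. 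Comparing the returned bottleneck against the threshold $\sqrt{2}-\varepsilon$ therefore decides satisfiability of $F$ in polynomial time, whence $P=NP$. The PTAS claim is then immediate: a PTAS would yield, for some $1+\delta<\sqrt{2}$, a polynomial-time $(1+\delta)$-approximation of ratio below $\sqrt{2}$, which the preceding argument excludes unless $P=NP$.

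The main obstacle I anticipate is the second step, namely certifying the quantitative geometry of the gap. It is routine to force a bottleneck-$1$ tree in the yes-case, but one must argue carefully that in the no-case there is genuinely \emph{no} planar bichromatic spanning tree of bottleneck below $\sqrt{2}-\varepsilon$: every alternative attachment of the stranded clause point, and every rerouting within the variable gadget, must be shown either to cross an existing edge or to require an edge of length at least $\sqrt{2}-\varepsilon$. This amounts to a finite case analysis over the local configurations around each variable's inner and outer circles, and the real work is in choosing a single placement of points for which all these lower bounds hold simultaneously while the satisfiable case remains at bottleneck $1$.
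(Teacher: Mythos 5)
Your proposal is correct and follows essentially the same route as the paper: the remark preceding the corollary is precisely the gap-producing modification you describe (clause chains at distance $1$ from the inner-circle site and at distance at least $\sqrt{2}-\varepsilon$ from the outer-circle sites), and the paper then invokes the same standard gap argument you spell out. Your closing caveat about certifying the no-case geometry is fair --- the paper leaves that verification equally informal.
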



\section{Approximation Algorithm} \label{sec:approx}
Let $P$ be a set of red and blue points in the plane and let $n=|P|$. 
Let $T$ be a bichromatic spanning tree of $P$ of minimum bottleneck ($T$ may have crossings and can be computed in $O(n\log{n})$ time~\cite{Biniaz18}).
Let $\lambda$ denote the bottleneck of $T$, i.e., the length of the longest edge in $T$. 
Notice that $\lambda$ is the lower bound for any bichromatic spanning tree of $P$, in particular for any planar bottleneck bichromatic spanning tree of $P$.
In this section, we show how to compute a planar bichromatic spanning tree of $P$, such that its bottleneck is at most $8\sqrt{2} \lambda$.

Our algorithm partitions the plane into disjoint cells satisfying the following properties:
\begin{enumerate}
	\item Each cell is convex and contains points of both colors.
	\item In each cell, the distance between any two points is bounded by $5\sqrt{2}\lambda$.
	\item The cells are connected, i.e., if we consider the graph with the cells as its vertices and there is an edge between two cells if they are adjacent (sharing a common boundary), then this graph is connected.
	\item We can construct a planar bichromatic spanning tree of the points in each cell and we can connect them without crossings.
\end{enumerate}

Assume, w.l.o.g., that $\lambda=1$. We begin by laying an axis-parallel grid,
such that each (square) cell is of edge length $3$ and the points of $P$ are in the interior (not on the boundary) of these cells; see Figure~\ref{fig:grid1}. 
We say that a cell $C_{i,j}$ is bichromatic if it contains points of both colors and we say that $C_{i,j}$ is monochromatic (red or blue) if all of the points in $C_{i,j}$ have the same color, otherwise, we say that $C_{i,j}$ is an empty cell.
\begin{figure}[ht]
   \centering
       \includegraphics[width=0.73\textwidth]{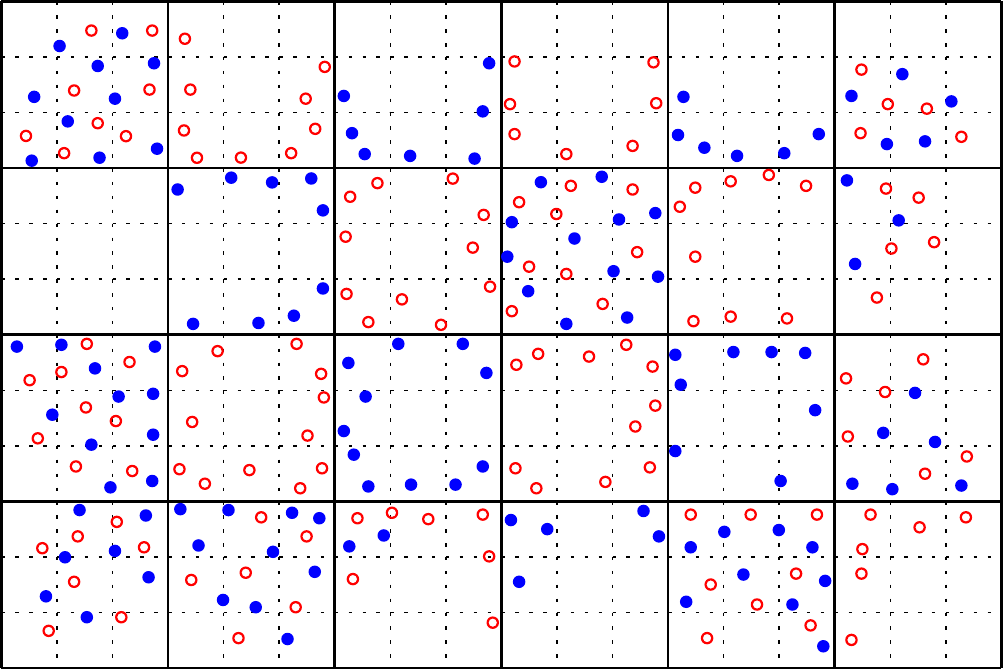}
   \caption{The grid partitioning the points of $P$.}              
   \label{fig:grid1}
\end{figure}

Our algorithm consists of two stages. In Stage~1, we modify the grid cells to satisfy properties (1)-(3), and, in Stage~2, we construct a planar bichromatic spanning tree of the points in each cell and connect between these trees to obtain a planar bichromatic spanning tree of $P$.

\subsection*{Stage~1}
In this stage, we consider the \mono cells and we partition and merge portions of them in order to obtain a subdivision in which all cells are convex and bichromatic. 
Let $C_{i,j}$ be a $3\times 3$ cell of the grid. 
Since $C_{i,j}$ is a $3\times 3$ cell, $C_{i,j}$ is the union of 9 unit sub-cells, labelled $C_{i,j}^k$, for $k=1,2,3,\dots,9$, as shown in Figure~\ref{fig:subcells}(a).
Notice that, since $C_{i,j}$ is a \mono cell, the points of $C_{i,j}$ are of distance at most 1 from the boundary of $C_{i,j}$, and therefore, $C_{i,j}^5$ is empty of points of $P$. 
The region $C_{i,j} \setminus C_{i,j}^5$ is the union of four trapezoids $\T_{i,j}^{l}$, $\T_{i,j}^{r}$, $\T_{i,j}^{t}$, and $\T_{i,j}^{b}$, such that $\T_{i,j}^{l}$ (resp., $\T_{i,j}^{r}$, $\T_{i,j}^{t}$, and $\T_{i,j}^{b}$) is the trapezoid obtained by connecting the left (resp., right, top, and bottom) corners of $C_{i,j}$ by diagonals to the left (resp., right, top, and bottom) corners of $C_{i,j}^5$; see Figure~\ref{fig:subcells}(b).

\begin{figure}[ht]
   \centering
       \includegraphics[width=0.52\textwidth]{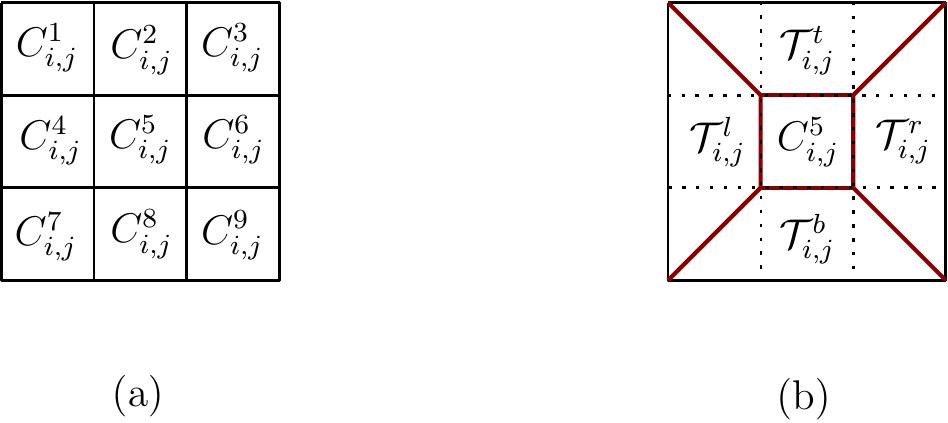}
   \caption{(a) The 9 unit sub-cells of cell $C_{i,j}$. (b) The trapezoids $\T_{i,j}^{l}$, $\T_{i,j}^{r}$, $\T_{i,j}^{t}$, and $\T_{i,j}^{b}$.}              
   \label{fig:subcells}
\end{figure}

\paragraph*{Stage~1.1}
In this stage, we introduce a directed graph $G$ in which the vertices are the \mono cells and the edges are defined as follows.
Let $C_{i,j}$ be a \mono cell, and let
$\N(C_{i,j}) = \{ C_{i,j-1}$, $C_{i,j+1}$, $C_{i-1,j}, C_{i-1,j+1} \}$
be the set of cells that share a grid edge with $C_{i,j}$.
Let $C \in \N(C_{i,j})$ be a \mono cell and assume, w.l.o.g., that $C=C_{i,j+1}$.
There is a directed edge from $C_{i,j}$ to $C_{i,j+1}$ if and only if $C_{i,j}$ and $C_{i,j+1}$ are of different colors and the trapezoid $\T_{i,j}^{r}$ is not empty of points of $P$; see Figure~\ref{fig:directedEdges}.

\begin{figure}[ht]
   \centering
       \includegraphics[width=0.38\textwidth]{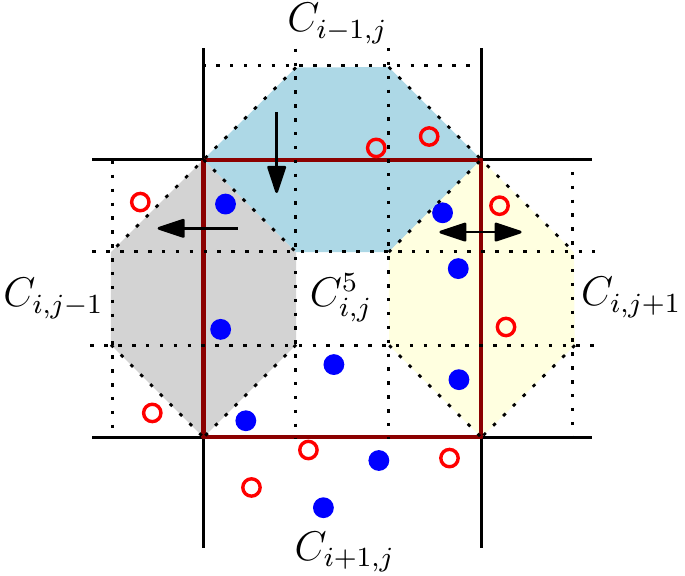}
   \caption{Directed edges between \mono cell $C_{i,j}$ and its \mono neighbors ($C_{i,j-1}$, $C_{i-1,j}$, $C_{i,j+1}$).}              
   \label{fig:directedEdges}
\end{figure}

\vspace{-0.4cm}
\paragraph*{Stage~1.2}
In this stage, we modify the grid cells by partitioning and merging some of the \mono cells with their neighbors, guided by the directed edges introduced in Stage 1.1. 
Before describing how to modify the grid cells, we describe the following \emph{cell partition procedure} that we will apply in this stage to the empty and some of the \mono cells.

\vspace{0.1cm}
\noindent
\textbf{Cell partition procedure.} 
For a \mono cell $C_{i,j}$, partition $C_{i,j} \setminus C_{i,j}^5$ into trapezoids $\T_{i,j}^{l}$, $\T_{i,j}^{r}$, $\T_{i,j}^{t}$, and $\T_{i,j}^{b}$, and merge them with the cells $C_{i,j-1}$ , $C_{i,j+1}$, $C_{i-1,j}$, and $C_{i+1,j}$, respectively; see Figure~\ref{fig:merge1}(b).
\begin{figure}[ht]
	\centering
			\includegraphics[width=0.84\textwidth]{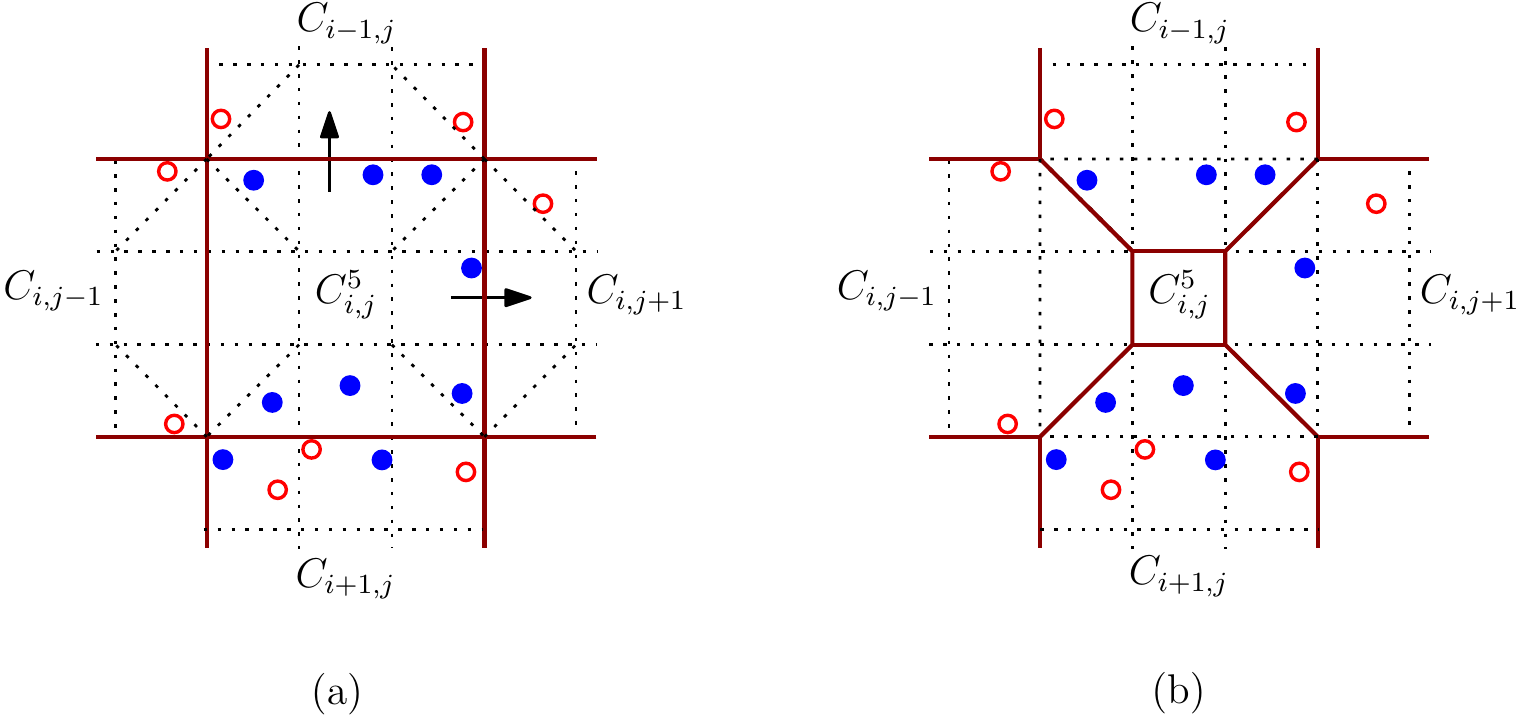}
	\caption{(a) $d_{in}(C_{i,j})=0$ and $d_{out}(C_{i,j})>0$. (b) Partitioning and merging $C_{i,j}$ with its neighbors.}              
	\label{fig:merge1}
\end{figure}

Let $d_{in}(C_{i,j})$ (resp., $d_{out}(C_{i,j})$) denote the in-degree (resp., the out-degree) of the vertex corresponding to the \mono cell $C_{i,j}$ in the graph $G$. We apply the following three steps on the \mono cells. 

\textbf{Step~1.} We apply this step as long as there exists a cell $C_{i,j}$ with $d_{in}(C_{i,j})=0$ and $d_{out}(C_{i,j})>0$. 
For each such cell, we apply the cell partition procedure on $C_{i,j}$ and remove the out-going edges from $C_{i,j}$ and from its neighbors $C_{i,j-1}$, $C_{i,j+1}$, $C_{i-1,j}$, and $C_{i+1,j}$; see Figure~\ref{fig:merge1}.

\textbf{Step~2.} We apply this step on the \mono cells $C_{i,j}$ with $d_{in}(C_{i,j})>0$. Consider the grid as an arbitrary white-black chessboard. For each white cell with $d_{in}(C_{i,j})>0$, we apply the cell partition procedure on $C_{i,j}$.

\textbf{Step~3.} We apply this step on the empty and the \mono cells $C_{i,j}$ with $d_{in}(C_{i,j})=0$ and $d_{out}(C_{i,j})=0$ (that are not considered in the previous steps). For each such cell, we apply the cell partition procedure on $C_{i,j}$.
		
We call a cell $C_{i,j}$ a \emph{partitioned cell} if $C_{i,j}$ has been partitioned (using the cell partition procedure), and we call it an \emph{extended cell} otherwise. If we have two adjacent partitioned cells $C_{i,j}$ and $C_{i,j+1}$, then we call the merged area of the two trapezoids $\T_{i,j}^r$ and $\T_{i,j+1}^l$ a \emph{lune}; see Figure~\ref{fig:lunePartition}. 
\begin{figure}[ht]
	\centering
			\includegraphics[width=0.84\textwidth]{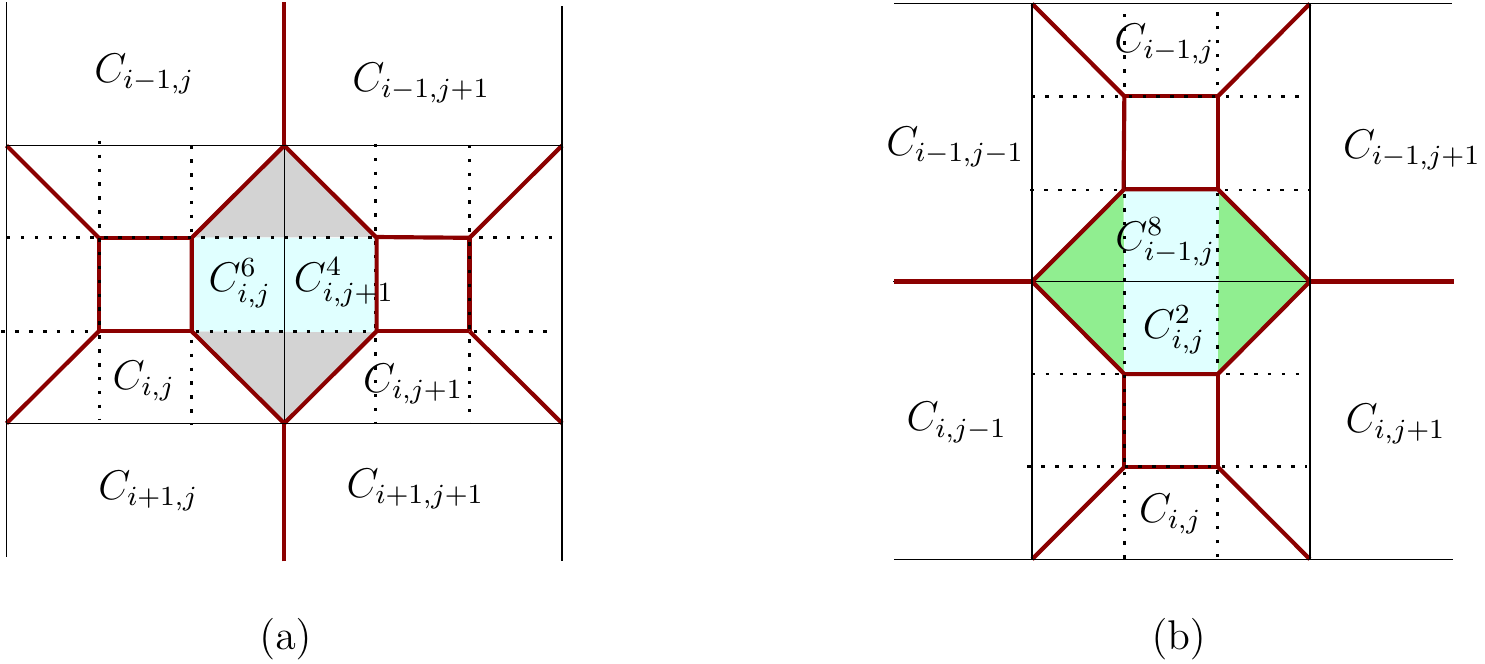}
	\caption{(a) vertical and (b) horizontal lunes.}              
	\label{fig:lunePartition}
\end{figure}
 
At the end of this stage, we have three types of non-empty convex cells: original $3\times 3$ cells, extended cells, and lunes. Clearly, each original cell is bichromatic, otherwise, it would have been partitioned or extended in Steps 1--3. 
Observe that each extended cell $C_{i,j}$ is bichromatic, since $d_{in}(C_{i,j})>0$. 
Observe also that each non-empty lune $L$ is monochromatic. To see this, assume, w.l.o.g., that $L$ was obtained by partitioning the cells $C_{i,j}$ and $C_{i,j+1}$ and merging the trapezoids $\T_{i,j}^r$ and $\T_{i,j+1}^l$. Thus, $L$ cannot be bichromatic, since otherwise, there would be a directed edge from $C_{i,j}$ to $C_{i,j+1}$ and vice versa, which means that one of the cells $C_{i,j}$ and $C_{i,j+1}$ (the black one in the chessboard) is extended in Step~2. 

\paragraph*{Stage~1.3}
In this stage, we get rid of the lunes, by partitioning each lune into sub-pieces and merging the sub-pieces with adjacent extended cells as follows.
Let $L_1$ be a vertical lune obtained by merging two adjacent trapezoids $\T_{i,j}^r$ and $\T_{i,j+1}^l$; see Figure~\ref{fig:lunePartition}(a). (A horizontal lune will be treated analogously.) 
As observed above, $L_1$ is monochromatic (or empty) which means that the subregion $C_{i,j}^6 \cup C_{i,j+1}^4 \subseteq L_1$  is empty of points of~$P$.

We consider the four triangles obtained by removing $C_{i,j}^6$ from $\T_{i,j}^r$ and removing $C_{i,j+1}^4$ from $\T_{i,j+1}^l$, and we merge them with the cells $C_{i-1,j}$ , $C_{i-1,j+1}$, $C_{i+1,j}$, and $C_{i+1,j+1}$ according to the following cases. We describe how to merge the top triangles with $C_{i-1,j}$ and $C_{i-1,j+1}$. (Merging the bottom triangles with $C_{i+1,j}$, and $C_{i+1,j+1}$ is done analogously.)
Let $v_t$ be the top vertex of $L_1$; see Figure~\ref{fig:lunePartition1}.
\begin{itemize}
	\item If both $C_{i-1,j}$ and $C_{i-1,j+1}$ are extended cells, then we merge the top-left triangle with $C_{i-1,j}$ and the top-right triangle with $C_{i-1,j+1}$; see Figure~\ref{fig:lunePartition1}(a).
	\item If $C_{i-1,j}$ is a partitioned cell and $C_{i-1,j+1}$ is an extended cell, then we have another horizontal lune $L_2$ between $C_{i-1,j}$ and $C_{i,j}$; see Figure~\ref{fig:lunePartition1}(b). Notice that the union of the top-left triangle of $L_1$ and the right-bottom triangle of $L_2$ is exactly the sub-cell $C_{i,j}^3$. Notice also that $L_2$ is \mono and has the same color as $L_1$. Thus, the points of $P$ in $C_{i,j}^3$ are of distance 1 from $v_t$. In this case, we merge the top-right triangle of $L_1$ and the right-top triangle of $L_2$ with $C_{i-1,j+1}$. Moreover, we merge the region of $C_{i,j}^3$ intersecting the disk of radius 1 centered at $v_t$ with $C_{i-1,j+1}$; see Figure~\ref{fig:lunePartition1}(b).
\begin{figure}[ht]
	\centering
			\includegraphics[width=0.76\textwidth]{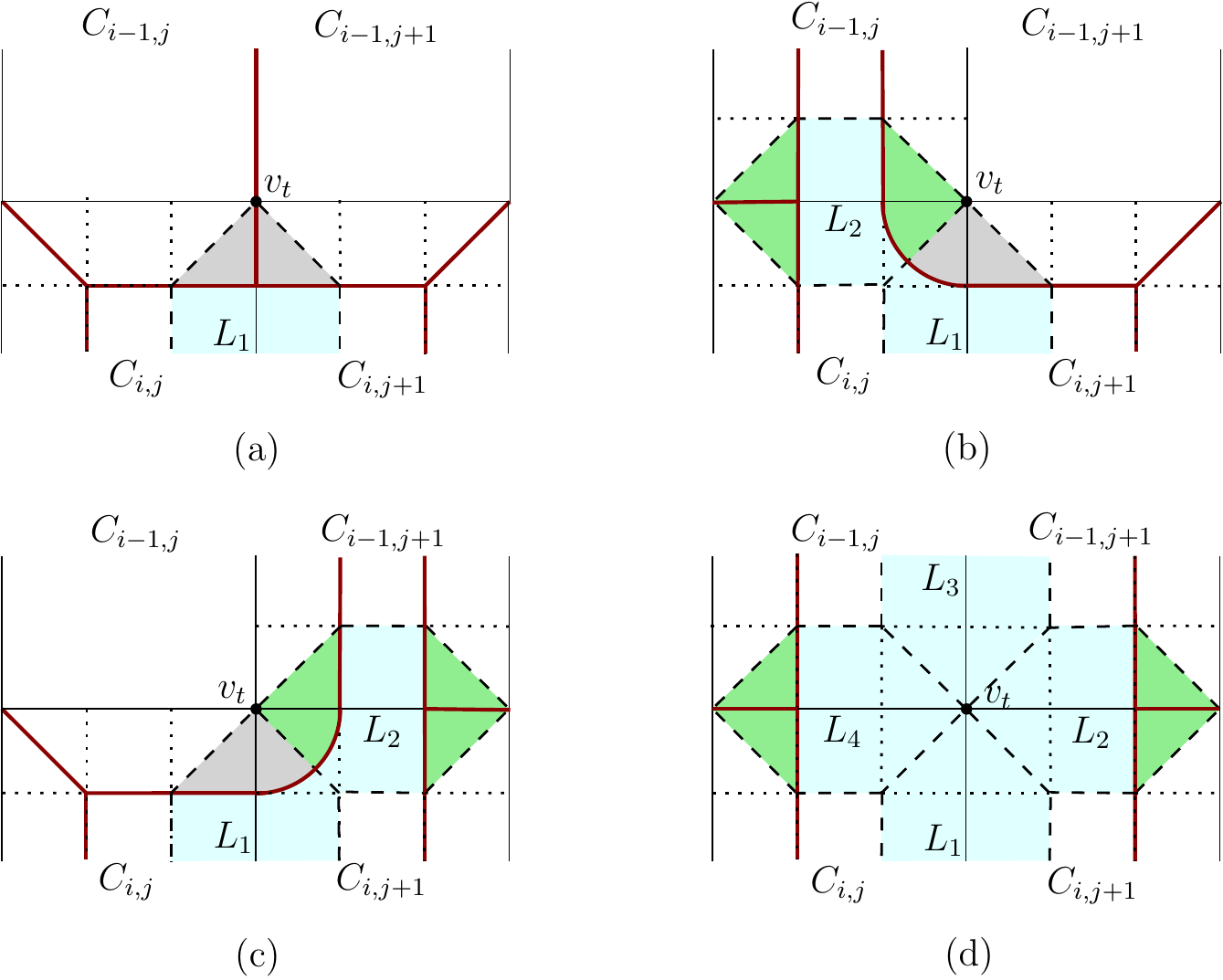}
	\caption{Merging the top triangles of $L_1$ with the cells $C_{i-1,j}$ and $C_{i-1,j+1}$.
	The gray and the green regions are part of vertical and horizontal lunes, respectively, and 
	the light blue regions are empty of points of $P$.
	}              
	\label{fig:lunePartition1}
\end{figure}
\noindent
	\item If $C_{i-1,j}$ is an extended cell and $C_{i-1,j+1}$ is a partitioned cell, then this case is symmetric to the previous case; see Figure~\ref{fig:lunePartition1}(c).
	\item If both $C_{i-1,j}$ and $C_{i-1,j+1}$ are partitioned cells, then we have four lunes incident to $v_t$; see Figure~\ref{fig:lunePartition1}(d). Since all of the lunes are \mono and have the same color, the triangles of these lunes that are incident to $v_t$ are empty of points of $P$ and, therefore, we remove these triangles from the division.
\end{itemize}

Moreover, in each partitioned cell $C_{i,j}$ such that $i=1$, $i=n$, $j=1$, or $j=n$, we treat the trapezoids adjacent to the boundary of the grid as half-lunes and we merge them with their adjacent extended cells as in the lunes case. 

Notice that, at the end of this stage, we have two types of non-empty cells: 
original $3\times 3$ cells and extended cells, and both types are convex and bichromatic cells; see Figure~\ref{fig:bottleneckBound}. From now on, we refer to both types of these cells as extended cells and denote them by $\hat{C}$. That is, $\hat{C}_{i,j}$ is either an original $3\times 3$ cell $C_{i,j}$ or an extended cell obtained by merging $C_{i,j}$ with trapezoids from its neighbors.
\begin{figure}[ht]
	\centering
			\includegraphics[width=0.72\textwidth]{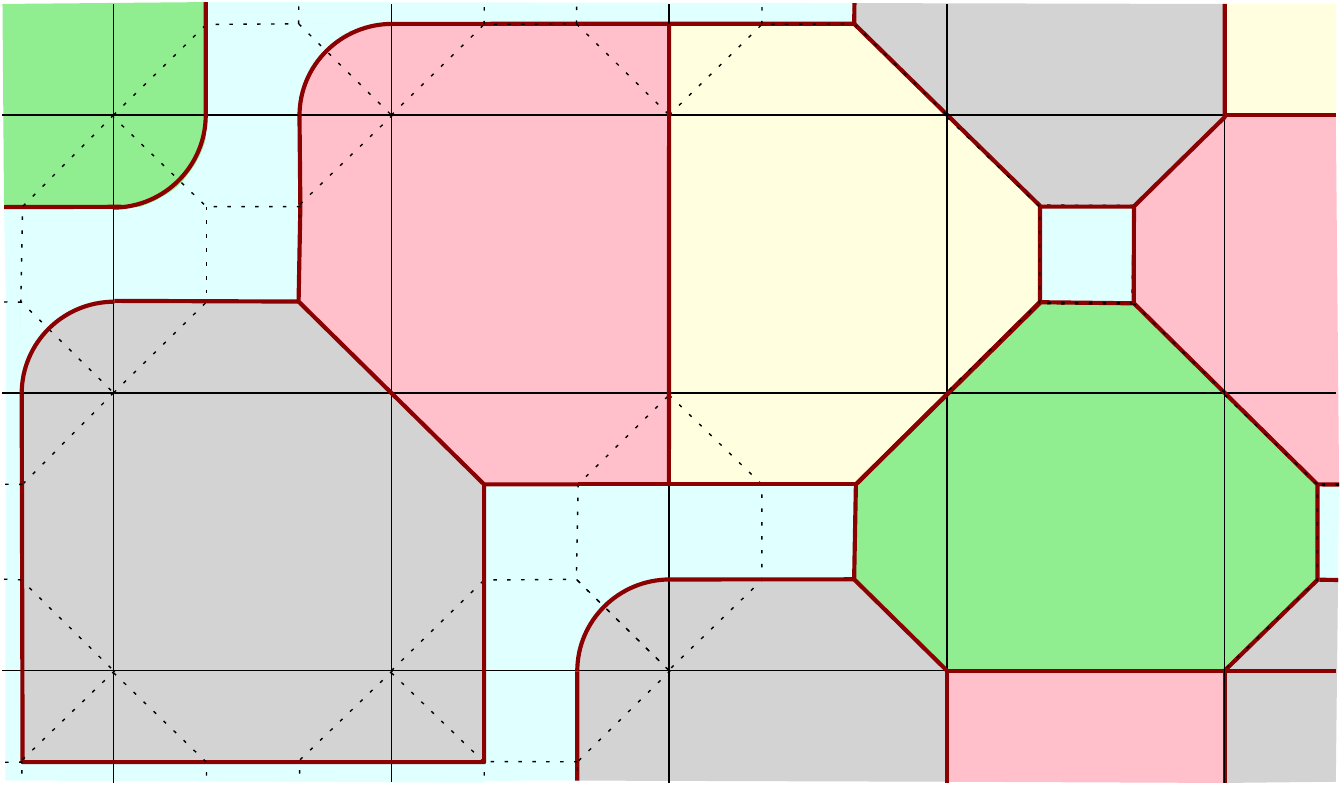}
	\caption{A subdivision obtained at the end of Stage~1. The light blue regions are empty of points of $P$.} 	
	\label{fig:bottleneckBound}
\end{figure}

\subsection*{Stage~2}
In this stage, we construct a planar bichromatic spanning tree in each (extended) cell and connect them to each other to obtain, overall, a planar bichromatic spanning tree of $P$.
For each cell $\hat{C}_{i,j}$, we denote by $\hat{P}_{i,j}$ the set of points of $P$ lying in $\hat{C}_{i,j}$.
If $C_{i,j}$ has been partitioned, then we set $\hat{P}_{i,j} = \emptyset$.

\vspace{-0.2cm}
\paragraph*{Stage~2.1}
\vspace{-0.1cm}
In each cell $\hat{C}_{i,j}$, we construct a planar bichromatic spanning tree $T_{i,j}$ of $\hat{P}_{i,j}$ as follows. We select an arbitrary red point $s \in \hat{P}_{i,j}$ as a center of the tree and connect it to each blue point in the cell to produce a star. We extend the edges of the star to partition the cell into convex cones, possibly except one cone; see Figure~\ref{fig:star}. If we have a non-convex cone, then we divide it into two convex cones by adding its bisector, as shown in Figure~\ref{fig:star}(right).  
Then, we connect all the red points in each cone to one of the blue points on the lines bounding the cone.
\begin{figure}[ht]
   \centering
       \includegraphics[width=0.56\textwidth]{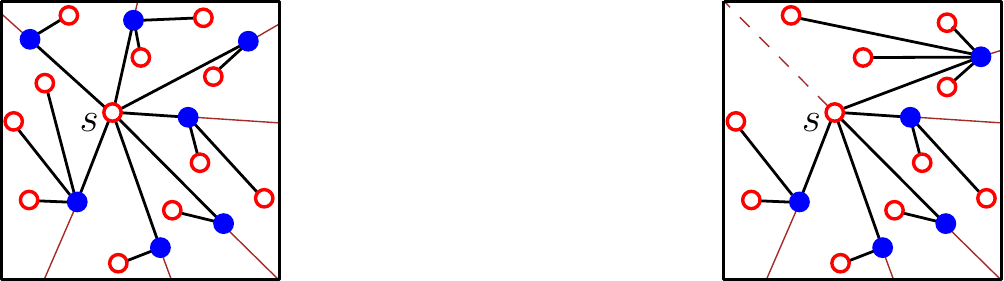}
   \caption{Constructing a planar bichromatic spanning tree in a cell.}            
	\label{fig:star}
\end{figure}
%
\begin {lemma} \label{lemma:connecting point to star}
Let $T_{i,j}$ be a tree constructed in  Stage~2.1 in cell $\hat{C}_{i,j}$ . 
Any (red or blue) point $p$ in the plane can be connected to $T_{i,j}$ without crossing the edges of $T_{i,j}$.
\end{lemma}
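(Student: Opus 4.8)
The plan is to exploit the cone structure produced in Stage~2.1. First I would set up the decomposition of the \emph{entire} plane induced by the star: the rays emanating from the red center $s$ through the blue points of $\hat{P}_{i,j}$, extended to full rays, partition the plane into convex cones with apex $s$ (after the one possibly reflex cone is split by its bisector). The key structural invariant I would record is that every edge of $T_{i,j}$ lies in the closure of a \emph{single} cone: each star edge $sb$ lies along a cone-boundary ray, and each edge joining a red point $r$ to its blue anchor $b_K$ lies inside the closed cone $\overline{K}$ that contains $r$. Consequently no edge of $T_{i,j}$ has interior points in two different cones, and distinct closed cones meet only along the boundary rays through $s$. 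I would also note that every cone $K$ carries at least one blue vertex $b_K$ on its boundary: for an ordinary cone both bounding rays pass through blue points, while each of the two cones obtained by bisecting the reflex cone has one bounding ray through an extreme blue point, and all red points of the cell lying in $K$ are joined to this $b_K$.

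Next I would carry out the connection itself. Given $p$, I locate the cone $K$ whose closure contains $p$ (breaking ties arbitrarily if $p$ lies on a boundary ray), and I connect $p$ to the blue vertex $b_K$. Since $\overline{K}$ is convex and both $p,b_K \in \overline{K}$, the segment $pb_K$ is contained in $\overline{K}$; this is the only place convexity of the cones is needed.

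It then remains to verify that $pb_K$ crosses no edge of $T_{i,j}$, which I would split into two cases according to the invariant above. Edges not contained in $\overline{K}$ lie in other closed cones and can meet $\overline{K}$ only along a boundary ray through $s$; since $p$ is interior to $K$, the segment $pb_K$ touches $\partial K$ only at $b_K$, so it can meet such an edge at most at $b_K$, which is not an interior crossing. The edges contained in $\overline{K}$ are $sb_K$, possibly a star edge $sb'$ on the opposite bounding ray, and the fan of edges $b_K r$ to the red points $r$ in $K$; the segment $pb_K$ shares the endpoint $b_K$ with $sb_K$ and with every $b_K r$, so it crosses none of them, and it reaches the opposite bounding ray (carrying $sb'$) only through $b_K$, which is not on that ray. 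Hence $pb_K$ is crossing-free.

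The hard part is the bookkeeping for the degenerate configurations rather than the geometry: one must confirm that a blue boundary vertex $b_K$ really exists for the two cones created by bisecting the reflex cone and for the single-blue-point case (where the lone star edge extends to a full line and the two half-planes play the role of cones). Finally I would remark that the same convexity argument lets one connect $p$ instead to a red vertex---namely a red endpoint bounding the sub-wedge of the fan from $b_K$ in which $p$ lies, or $s$ itself when $K$ has no red point---so the color of the attachment endpoint can be chosen freely, which is what the later stages need in order to keep the combined tree bichromatic.
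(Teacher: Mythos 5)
Your proof is correct and takes essentially the same route as the paper's: both exploit the convex cone decomposition induced by the star at $s$, attach a red $p$ to the blue anchor of the cone containing it, and attach a blue $p$ to a red vertex reached inside that cone. The paper phrases the blue case as ``connect to $s$ if the segment $ps$ is crossing-free, otherwise to the (red) endpoint of the first edge crossing $ps$'', which is the same sub-wedge/visibility argument you sketch in your closing remark.
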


\begin{proof}
Let $s$ be the center of $T_{i,j}$ and recall that its color is red. Consider the cones produced by the rays between $s$ and the blue points of $T_{i,j}$. Let $C$ be the cone containing $p$ and let $a$ and $b$ be the two blue points defining $C$. By the way we constructed $T_{i,j}$, all the points in $C$ are red and connected to exactly one of the points $a$ and $b$, assume, w.l.o.g., $a$. We distinguish between two cases with respect to the color of $p$. \\
\textbf{Case~1:} $p$ is a blue point. If the edge $(s,p)$ does not cross the edges of $T_{i,j}$, then we connect $p$ to $s$. Otherwise, we connect $p$ to the endpoint of the first edge (from $p$) crossing $(s,p)$; see Figure~\ref{fig:pointToStar}(a). \\
\textbf{Case~2:} $p$ is a red point. In this case, we connect $p$ to $a$; see Figure~\ref{fig:pointToStar}(b).
\vspace{-0.1cm} 
\end{proof}
\begin{figure}[ht]
	\centering
			\includegraphics[width=0.76\textwidth]{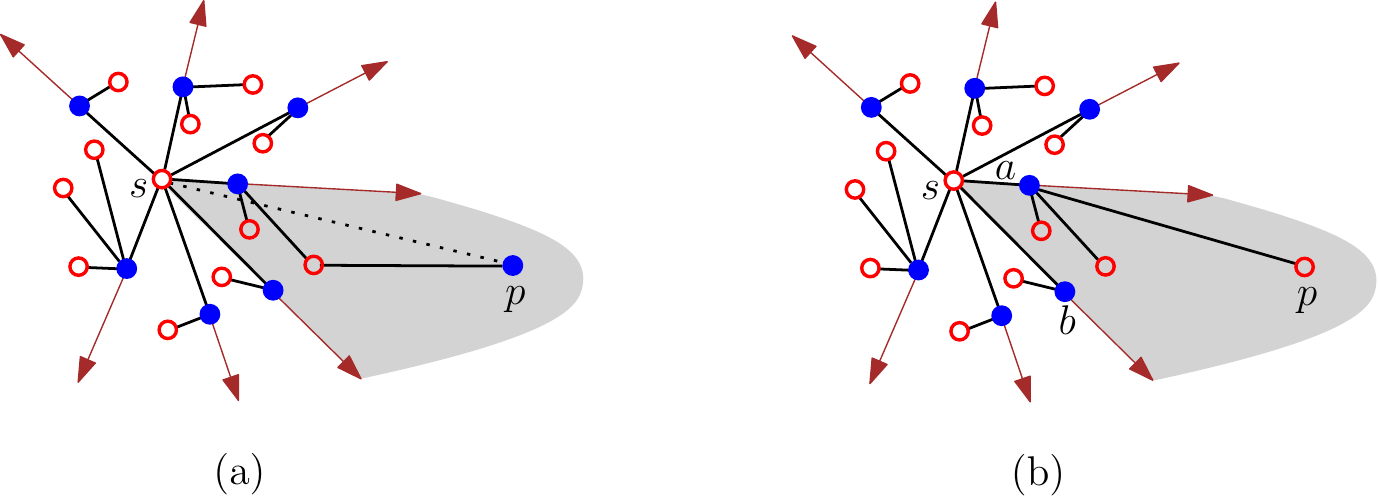}
	\caption{(a) $p$ and $s$ are of different colors. (b) $p$ and $s$ are of the same color.} 	
	\label{fig:pointToStar}
\end{figure}


\vspace{-0.4cm}
\paragraph*{Stage~2.2}

In this stage, we connect between the trees that are constructed in Stage~2.1 to obtain a planar bichromatic spanning tree of $P$.
Let $\hat{C}_{i,j}$ and $\hat{C}_{k,l}$ be two (extended) cells. We say that $\hat{C}_{k,l}$ is a \emph{side} adjacent (or s-adjacent for short) cell of $\hat{C}_{i,j}$, if one of the following holds:
\begin{itemize}
	\item $k=i+1$ and $l=j$, or
	\item $k=i$ and $l=j+1$,
\end{itemize}
and we say that $\hat{C}_{k,l}$ is a \emph{diagonal} adjacent (or d-adjacent for short) cell of $\hat{C}_{i,j}$, if one of the following holds:
\begin{itemize}
	\item $k=i-1$, $l=j+1$, and $C_{i-1,j}$ and $C_{i,j+1}$ have been partitioned, or
	\item $k=i-1$, $l=j-1$, and $C_{i,j-1}$ and $C_{i-1,j}$ have been partitioned.
\end{itemize}

We construct a bichromatic spanning tree $T'$ of $P$ by traversing the cells starting from an arbitrary (non-empty) cell  (in breadth first search (BFS) manner).
That is, we first initiate a tree $T'$ by an arbitrary tree $T_{i,j}$ that is constructed in a cell $\hat{C}_{i,j}$. Then, we connect $T'$ to all the trees constructed in the cells adjacent to $\hat{C}_{i,j}$, and proceed from these trees. 
More precisely, in each step, we consider a tree $T_{i,j}$, which is already connected to $T'$, and we connect $T'$ to all of the trees constructed in the cells adjacent to $\hat{C}_{i,j}$ via $T_{i,j}$ (if they are not connected yet to $T'$). 
In the following, we describe how to connect $T'$ to all the trees constructed in the cells adjacent to $\hat{C}_{i,j}$.

\old{ 
Recall that $\N(C_{i,j})$ denotes the set $\{ C_{i,j-1}$, $C_{i,j+1}$, $C_{i-1,j}, C_{i-1,j+1} \}$ of (original) cells that are adjacent to the (original) cell $C_{i,j}$.
For each cell $C \in \N(C_{i,j})$, if $C$ has not been partitioned and the tree $T_C$ constructed in $C$ is not connected yet to $T'$, then we connect $T_{i,j}$ to $T_C$. 
Moreover, if two cells in $\N(C_{i,j})$ that are incident to the same corner are partitioned cells, 
for example $C_{i-1,j}$ and $C_{i,j+1}$, then, if $C_{i-1,j+1}$ is not partitioned and not connected yet to $T'$, then we connect $T_{i,j}$ to the tree constructed in $\hat{C}_{i-1,j+1}$.

Let $v_{tr}$, $v_{tl}$, $v_{br}$, and $v_{bl}$ be the top-right, top-left, bottom-right, and bottom-left vertices of the grid incident to $C_{i,j}$, respectively; see Figure~\ref{fig:case1}.
Let $v$ be a vertex of the grid incident to $C_{i,j}$ and assume, w.l.o.g., that $v=v_{tr}$.
We distinguish between two cases.
In Case~1, we show how to connect two trees constructed in extended cells that share a diagonal edge and, in Case~2, we show how to connect two trees constructed in cells that share a vertical or a horizontal edge.
More precisely, for each tree $T_{i,j}$, in Case~1, we show how to connect $T_{i,j}$ to the tree $T_{i-1,j+1}$ constructed in $\hat{C}_{i-1,j+1}$ (in case that $C_{i-1,j}$ and $C_{i,j+1}$ are partitioned cells), and, in Case~2, for each $C \in \N(C_{i,j})$, we show how to connect $T_{i,j}$ to $T_C$ (in case that $C$ is not partitioned and the tree $T_C$ constructed in the extended cell $\hat{C}$ is not connected yet to $T'$).
}

Let $\hat{C}$ be a cell adjacent to $\hat{C}_{i,j}$, such that the tree $T_C$ constructed in $\hat{C}$ is not connected yet to $T'$. Let $v_{tr}$, $v_{tl}$, $v_{br}$, and $v_{bl}$ be the top-right, top-left, bottom-right, and bottom-left vertices of the grid incident to $C_{i,j}$, respectively; see Figure~\ref{fig:case1}. We distinguish between two cases. \\
\textbf{Case~1:} $\hat{C}$ is a d-adjacent cell of $\hat{C}_{i,j}$. Assume, w.l.o.g., that $\hat{C} = \hat{C}_{i-1,j+1}$. Then, the boundaries of $\hat{C}_{i,j}$ and $\hat{C}_{i-1,j+1}$ share a common (diagonal) edge $\overline{ab}$; see Figure~\ref{fig:case1}. 
Moreover, the convex hull of $\hat{C}_{i,j} \cup \hat{C}_{i-1,j+1}$ does not contain any point of $P\setminus (\hat{P}_{i,j} \cup \hat{P}_{i-1,j+1})$ (this can be seen clearly in Figure~\ref{fig:bottleneckBound}).
Let $p \in \hat{P}_{i,j}$ be the closest point to the line passing through $\overline{ab}$, such that no edge of $T'$ crosses the triangle $\Delta pab$. By Claim~\ref{claim:empty triangle}, such a point $p$ exists.
Then, any edge connecting $p$ to any point of $T_{i-1,j+1}$ does not cross any non-empty cell except $\hat{C}_{i,j}$ and $\hat{C}_{i-1,j+1}$. 
Therefore, by Lemma~\ref{lemma:connecting point to star}, we can connect $p$ to $T_{i-1,j+1}$ without crossing any other edge of $T'$.

\begin{figure}[ht]
	\centering
			\includegraphics[width=0.32\textwidth]{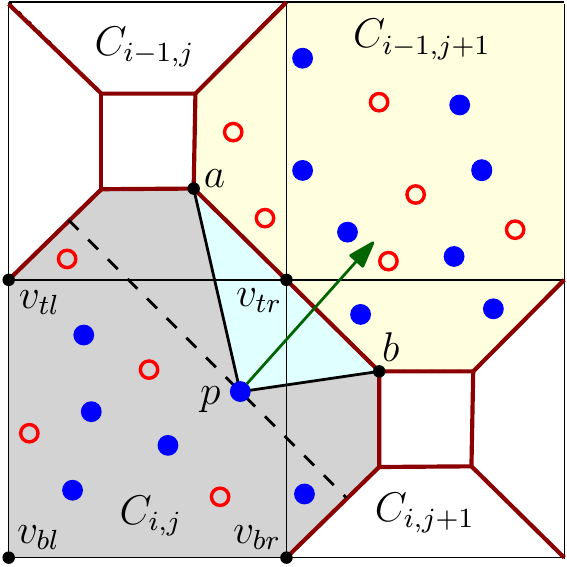}
	\caption{Illustration of Case~1. We connect $T_{i-1,j+1}$ to $T_{i,j}$ via $p$.} 	
	\label{fig:case1}
\end{figure}

\noindent
\textbf{Case~2:} $\hat{C}$ is an s-adjacent cell of $\hat{C}_{i,j}$. Assume, w.l.o.g., that $\hat{C} = \hat{C}_{i,j+1}$.
Let $p$ be the rightmost point in $\hat{P}_{i,j}$, such that no edge of $T'$ crosses the triangle $\Delta pv_{br}v_{tr}$; 
see Figure~\ref{fig:case21}. By Claim~\ref{claim:empty triangle}, such a point $p$ exists. 
Let $H$ be the convex hull of $\hat{P}_{i,j+1} \cup \{p\}$.
We consider two sub-cases. \\
\textbf{Case~2.1:}  $H \cap (P \setminus (\hat{P}_{i,j} \cup \hat{P}_{i,j+1}))= \emptyset $  (i.e., $H$ does not contain any point of $P$ that is not in $\hat{P}_{i,j}  \cup \hat{P}_{i,j+1}$); see Figure~\ref{fig:case21}. 
  Therefore, by Lemma~\ref{lemma:connecting point to star}, we can connect $T_{i,j+1}$ to $T_{i,j}$ via $p$, without crossing any other edge of $T'$. 
	
\begin{figure}[ht]
	\centering
			\includegraphics[width=0.78\textwidth]{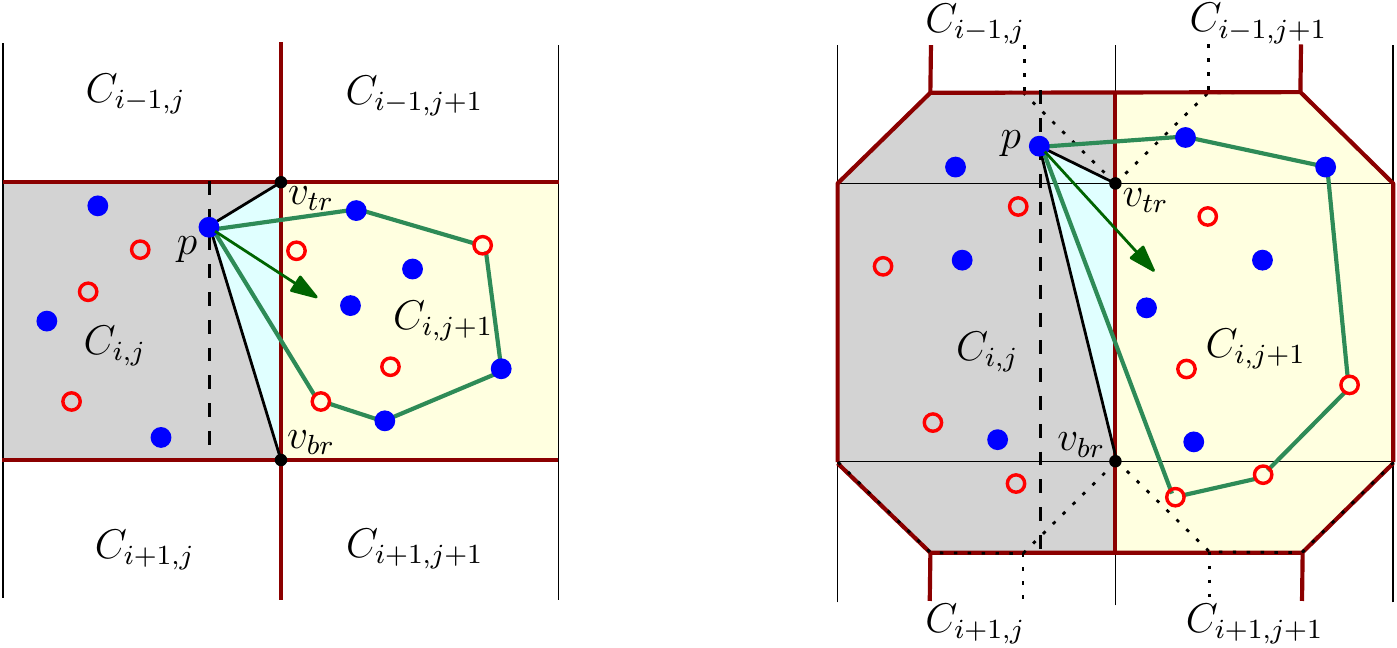}
	\caption{The convex hull $H$ of $\hat{P}_{i,j+1} \cup \{p\}$ (consisting of green segments) does not contain any point of $P$ that is not in $\hat{P}_{i,j}  \cup \hat{P}_{i,j+1}$. We connect $T_{i,j+1}$ to $T_{i,j}$ via $p$.} 	
	\label{fig:case21}
\end{figure}

\noindent
\textbf{Case~2.2:}  $H \cap (P \setminus (\hat{P}_{i,j} \cup \hat{P}_{i,j+1})) \neq \emptyset $  (i.e., $H$  contains a point of $P$ that is not in  $\hat{P}_{i,j}  \cup  \hat{P}_{i,j+1}$). 
In this case, $H$ contains a point in $\hat{P}_{i-1,j} \cup \hat{P}_{i-1,j+1}$ or in $\hat{P}_{i+1,j} \cup \hat{P}_{i+1,j+1}$.
Assume, w.l.o.g., that $H$ contains a point in $\hat{P}_{i-1,j} \cup \hat{P}_{i-1,j+1}$; see Figure~\ref{fig:case22}.
Notice that exactly one of the sets $\hat{P}_{i-1,j}$ or $\hat{P}_{i-1,j+1}$ is an empty set, 
since, in this case, exactly one of the cells $C_{i-1,j}$  or $C_{i-1,j+1}$ has been partitioned.
We further distinguish between two cases.
\begin{enumerate}
   \item $H \cap \hat{P}_{i-1,j} \neq \emptyset$; see Figure~\ref{fig:case22}(a).
   In this case we first connect $T_{i,j+1}$ to $T_{i-1,j}$  as follows.
   Let $q \in \hat{P}_{i-1,j}$ be the closest point to the line passing through the boundary edge between $\hat{C}_{i-1,j}$ and $\hat{C}_{i,j+1}$.
	Then, the convex hull of $\hat{P}_{i,j+1} \cup \{ q \}$ does not contain any point of $P$ that is not in $\hat{P}_{i,j+1} \cup \{ q \}$.
	Therefore, by Lemma~\ref{lemma:connecting point to star}, we can connect $T_{i,j+1}$ to $T_{i-1,j}$ via $q$, without crossing any other edge of $T'$.
	
	Moreover, if $T_{i-1,j}$ is not connected yet to $T'$, then we apply Case~2 on $\hat{C}_{i-1,j}$ to connect $T_{i-1,j}$ to $T_{i,j}$.

\begin{figure}[ht]
	\centering
			\includegraphics[width=0.78\textwidth]{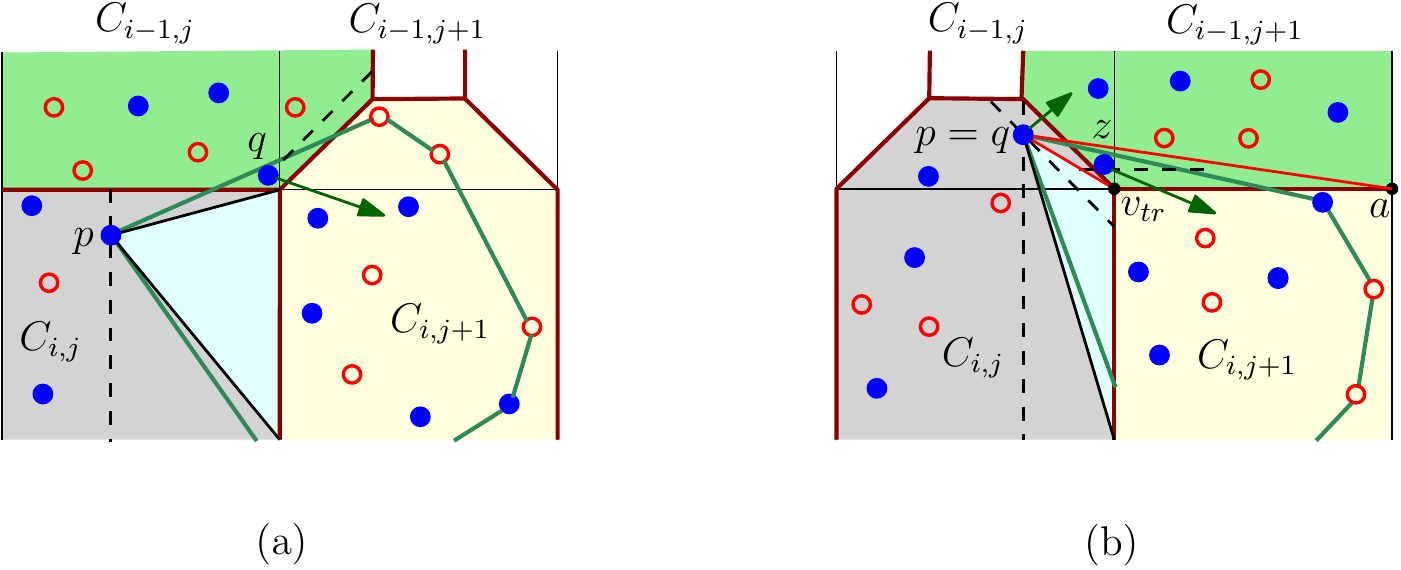}
	\caption{(a) $H$ contains points from $\hat{P}_{i-1,j}$. We connect $T_{i,j+1}$ to $T_{i-1,j}$ via $q$. (b) $H$ contains a point $z$ from $\hat{P}_{i-1,j+1}$. We connect $T_{i-1,j+1}$ to $T_{i,j}$ via $q$ and we connect $T_{i,j+1}$ to $T_{i-1,j+1}$ via $z$.} 	
	\label{fig:case22}
\end{figure}

   \item $H \cap \hat{P}_{i-1,j+1} \neq \emptyset$.
	If $T_{i-1,j+1}$ is not connected yet to $T'$, then we first connect $T_{i-1,j+1}$ to $T_{i,j}$ as follows.
  Let $q \in \hat{P}_{i,j}$ be the closest point to the line passing through the boundary edge between $\hat{C}_{i,j}$ and $\hat{C}_{i-1,j+1}$; 
	see Figure~\ref{fig:case22}(b). 
	Then, the convex hull of $\hat{P}_{i-1,j+1} \cup \{ q \}$ does not contain any point of $P$ that is not in $\hat{P}_{i-1,j+1} \cup \{ q \}$.
	Therefore, by Lemma~\ref{lemma:connecting point to star}, we can connect $T_{i-1,j+1}$ to $T_{i,j}$ via $q$, without crossing any other edge of $T'$.

Moreover, we connect $T_{i,j+1}$ to $T_{i-1,j+1}$ as follows. Let $a$ be the bottom-right corner of $C_{i-1,j+1}$. Thus, $\Delta{pv_{tr}a} \cap \hat{P}_{i-1,j+1} \neq \emptyset$ and $\Delta{pv_{tr}a} \cap (\hat{P}_{i,j}\setminus\{p\}) = \emptyset$. Let $z$ be the bottommost point in $\Delta{pv_{tr}a} \cap \hat{P}_{i-1,j+1}$, such that no edge of $T'$ crosses the triangle $\Delta zv_{tr}a$; see Figure~\ref{fig:case22}(b). By Claim~\ref{claim:empty triangle}, such a point $z$ exists.
Then, the convex hull of $\hat{P}_{i,j+1} \cup \{ z \}$ does not contain any point of $P$ that is not in $\hat{P}_{i,j+1} \cup \{ z \}$. Therefore, by Lemma~\ref{lemma:connecting point to star}, we can connect $T_{i,j+1}$ to $T_{i-1,j+1}$ via $z$, without crossing any other edge of $T'$. (Notice that the two edges added in this case do not cross each other.)

\end{enumerate}

\subsubsection*{Correctness Proof}
Recall that $T$ is a bichromatic spanning tree of $P$ of minimum bottleneck $\lambda$. In this section, we prove that $T'$ is a planar bichromatic spanning tree of $P$ of bottleneck at most $8\sqrt{2}\lambda$. Notice that every point $p \in P$ is contained in a bichromatic cell $\hat{C}_{i,j}$ and hence, it is connected to $T_{i,j}$, the tree constructed in Stage~2.1 in $\hat{C}_{i,j}$. Therefore, to show that $T'$ is a bichromatic spanning tree of $P$, it is sufficient to show that each tree $T_{i,j}$ is connected to $T'$.

\begin{claim} \label{claim:empty triangle}
Let $T'$ be the tree constructed at some step during Stage~2.2 and assume that $T'$ is planar.
Let $T_{i,j}$ be a tree constructed in $\hat{C}_{i,j}$ and $T_{i,j}$ is already connected to $T'$.  
Let $\hat{C}$ be an adjacent cell of $\hat{C}_{i,j}$ that shares an edge $\overline{ab}$ with $\hat{C}_{i,j}$ and let $T_C$ be the tree constructed in $\hat{C}$, and assume that $T_C$ is not connected yet to $T'$.
Then, there exists a point $p$ in $\hat{C}_{i,j}$, such that no edge of $T'$ crosses the triangle $\Delta p a b$.
\end{claim}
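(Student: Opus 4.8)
The plan is to produce the apex $p$ by an extremal/descent argument, after first restricting which edges of $T'$ can possibly meet the interior of the triangle. Write $\ell$ for the line through $a$ and $b$. Since $\overline{ab}$ is an edge of the convex cell $\hat{C}_{i,j}$, the whole cell, and in particular all of $\hat{P}_{i,j}$, lies in one closed halfplane bounded by $\ell$. Two structural facts drive everything. First, no edge of $T'$ crosses the open segment $\overline{ab}$: such an edge would pass from the $\hat{C}_{i,j}$-side to the $\hat{C}$-side and hence have an endpoint in $\hat{C}$, but $T_C$ is not yet connected to $T'$, so $T'$ has no vertex in $\hat{C}$, and by the construction a connection edge joins only the two cells sharing the boundary it crosses. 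Second, every edge of $T'$ whose relative interior meets the interior of $\hat{C}_{i,j}$ has an endpoint in $\hat{P}_{i,j}$: it is either an edge of $T_{i,j}$ or a connection edge incident to $\hat{C}_{i,j}$, since connection edges between other pairs of cells stay inside those two cells.

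I would first try the natural candidate: let $p \in \hat{P}_{i,j}$ be the point nearest $\ell$, at distance $d$. The interior of $\Delta pab$ lies at distance strictly less than $d$ from $\ell$, so it contains no point of $\hat{P}_{i,j}$ and, by the second fact together with the fact that the far endpoints of connection edges lie outside the cell, no endpoint of any edge of $T'$. Hence any obstructing edge passes all the way through $\Delta pab$, crossing its boundary twice; by the first fact it cannot use $\overline{ab}$, so it crosses both lateral sides $\overline{pa}$ and $\overline{pb}$. An edge of $T_{i,j}$ is excluded at once: both its endpoints lie at distance $\ge d$ from $\ell$ and it stays on the cell's side of $\ell$, so the distance to $\ell$ is affine and never below $d$ along it, whereas the triangle's interior is at distance $<d$. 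This reduces the whole statement to excluding one type of obstruction: a connection edge $\overline{uw}$ with $u \in \hat{P}_{i,j}$ and $w$ outside $\hat{C}_{i,j}$ that threads through both $\overline{pa}$ and $\overline{pb}$.

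To finish I would repair obstructions by a descent on the apex. At the current apex $p$, if no point of $\hat{P}_{i,j}$ lies inside $\Delta pab$ and no edge crosses it, we are done; otherwise I move to a point of $\hat{P}_{i,j}$ strictly nearer $\ell$: to the nearest interior point $q$ if some $q\in\hat{P}_{i,j}$ lies inside $\Delta pab$, or, if an intra-cell edge crosses the triangle, to the endpoint of that edge on the side of decreasing distance (the distance to $\ell$ is affine along the edge, so that endpoint lies strictly below $d(p)$). As $d$ then strictly decreases and $\hat{P}_{i,j}$ is finite, these moves cannot recur. The main obstacle is the residual configuration in which the triangle contains no point of $\hat{P}_{i,j}$ and is crossed only by a connection edge $\overline{uw}$ whose in-cell endpoint $u$ is its far end ($d(u)>d(p)$) and which leaves $\hat{C}_{i,j}$ near a corner $a$ or $b$; here no distance-decreasing move is available. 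I would resolve this by reformulating the goal as: the complete-visibility region of $\overline{ab}$ in $\hat{C}_{i,j}$ --- the set of points seeing every point of $\overline{ab}$ without meeting an edge of $T'$ --- contains a vertex of $T_{i,j}$. Such a vertex is exactly a valid apex $p$, since then $\Delta pab$ is a union of uncrossed segments. Proving this last statement is the delicate heart of the argument: I expect to obtain it from the connectivity and spanning property of $T_{i,j}$ inside the convex cell, by following the silhouette of $T_{i,j}$ facing $\overline{ab}$ to a vertex whose entire triangle to $\overline{ab}$ stays in the visibility region, which also forces one to account for reflex vertices and for the pendant in-cell stubs of connection edges.
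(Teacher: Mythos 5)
Your setup is sound and, in fact, more explicit than the paper's: the two structural facts (no edge of $T'$ crosses $\overline{ab}$ because $T_C$ is not yet attached, and every edge of $T'$ entering $\hat{C}_{i,j}$ has an endpoint in $\hat{P}_{i,j}$), the choice of $p$ as the point of $\hat{P}_{i,j}$ nearest the line $\ell$ through $a,b$, and the affine-distance argument eliminating edges of $T_{i,j}$ are all correct. You thereby correctly isolate the only possible obstruction: a connection edge $\overline{uw}$ with $u\in\hat{P}_{i,j}$, $d(u)\ge d(p)$, and $w$ outside the cell, clipping the triangle near a corner $a$ or $b$. The problem is that you then do not dispose of this case. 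Your descent has no decreasing move available here (as you yourself observe), and your fallback --- that the complete-visibility region of $\overline{ab}$ inside $\hat{C}_{i,j}$ contains a vertex of $T_{i,j}$ --- is precisely a restatement of the claim to be proved; the sentence ``I expect to obtain it from the connectivity and spanning property of $T_{i,j}$\dots'' is a plan, not an argument. So the proof is incomplete exactly at what you call its ``delicate heart.''

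For comparison, the paper resolves this residual case with a sweep: sweep a line parallel to $\overline{ab}$ from $\overline{ab}$ into $\hat{C}_{i,j}$; if an edge of $T'$ is met before any point of $\hat{P}_{i,j}$, that edge must have exactly one endpoint outside the cell (otherwise its first-swept point would be a point of $\hat{P}_{i,j}$), so one cuts $\hat{C}_{i,j}$ along the line supporting that edge, keeps the piece containing $\overline{ab}$, and repeats; planarity of $T'$ and the fact that such blocking edges must exit through the cell boundary on either side of $\overline{ab}$ (never through $\overline{ab}$ itself) bound the number of cuts, after which the sweep necessarily meets a point, and that point serves as the apex. If you want to salvage your extremal formulation, the analogous move is: when a connection edge $\overline{uw}$ obstructs $\Delta pab$, restrict attention to the sub-polygon of $\hat{C}_{i,j}$ cut off by the line through $u$ and $w$ on the side of $\overline{ab}$, argue it still contains a point of $\hat{P}_{i,j}$ (namely the in-cell endpoint of the next obstruction, or a swept point), and re-apply your nearest-point argument there; some such termination argument must be supplied before the claim is established.
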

\begin{proof}
Assume, w.l.o.g., that $\hat{C} = \hat{C}_{i,j+1}$, $a=v_{tr}$, and $b=v_{br}$; see Figure~\ref{fig:emptyTriangle}.
The following procedure shows the existence of such a point $p$.
We sweep leftwards in $\hat{C}_{i,j}$ with a vertical line $l$, starting from $\overline{v_{tr} v_{br}}$ until we meet a point, or an edge of $T'$. If we first meet a point, then this point satisfies the claim. 
Otherwise, we first meet an edge $(p',q')$ of $T'$; see Figure~\ref{fig:emptyTriangle}. This could only be when exactly one of the endpoints $p'$ or $q'$ is outside $\hat{C}_{i,j}$. 
Let $C_l$ and $C_r$ be the two sub-cells obtained by partitioning $\hat{C}_{i,j}$ with the line that goes through the points $p'$ and $q'$. Let $C_r$ be the sub-cell containing $v_{tr}$ and $v_{br}$. We keep sweeping leftwards only inside $C_r$. 
As before, if we first meet a point, then this point satisfies the claim. 
Otherwise, we meet an edge $(p'',q'')$ of $T'$ before we meet a point. Then, one of the endpoints $p''$ or $q''$ 
is outside $\hat{C}_{i,j}$. Let $C_{rl}$ and $C_{rr}$ be the two sub-cells obtained by partitioning $C_r$ with the line that goes through the points $p''$ and $q''$. Let $C_{rr}$ be the sub-cell containing $v_{tr}$ and $v_{br}$. We keep sweeping leftwards only inside $C_{rr}$, until we meet a point, and this point satisfies the claim. 
Notice that, in the last sweep we meet a point before we meet an edge of $T'$.
This follows from the planarity of $T'$.
\end{proof} 

\begin{figure}[ht]
	\centering
			\includegraphics[width=0.31\textwidth]{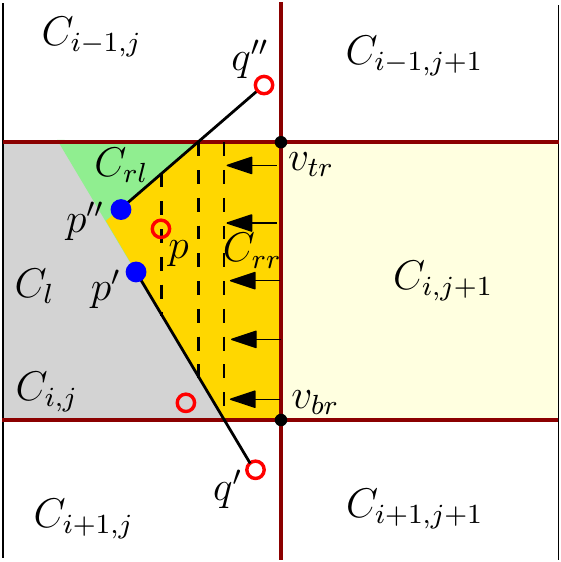}
	\caption{Illustration of the proof of Claim~\ref{claim:empty triangle}.} 	
	\label{fig:emptyTriangle}
\end{figure}

\begin{lemma} \label{lemma:adjacent cells connected}
Let $T_{i,j}$ be a tree constructed in $\hat{C}_{i,j}$ in Stage~2.1 and assume that $T_{i,j}$ is already connected to $T'$. Then, at the end of Stage~2.2, all the trees that are constructed in the  cells adjacent to $\hat{C}_{i,j}$ are connected to $T'$ as well.  
\end{lemma}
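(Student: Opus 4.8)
The plan is to prove the lemma by a case analysis over the boundary edges of $\hat{C}_{i,j}$, showing that the connection procedure of Stage~2.2, applied while processing $\hat{C}_{i,j}$, reaches the tree of every cell sharing a boundary with $\hat{C}_{i,j}$ (if it is not already connected). By the structure produced in Stage~1, the boundary of an extended cell consists of grid (axis-parallel) edges and diagonal edges coming from the trapezoid partitions; hence every neighbor is reached either across an axis-parallel edge, in which case it is an s-adjacent cell, or across a diagonal edge, in which case it is a d-adjacent cell. Using the w.l.o.g.\ relabelings and reflections built into the statements of Case~1 and Case~2, it suffices to handle a representative d-adjacent cell $\hat{C}_{i-1,j+1}$ and a representative s-adjacent cell $\hat{C}_{i,j+1}$; the remaining directions follow by symmetry. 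Throughout I maintain the invariant that $T'$ stays planar after each added edge, which is precisely the hypothesis under which Claim~\ref{claim:empty triangle} guarantees the required empty triangles.

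For a d-adjacent neighbor I would reuse the argument of Case~1 directly. The two cells share a diagonal edge $\overline{ab}$, and, by the layout obtained at the end of Stage~1, the convex hull of $\hat{C}_{i,j}\cup\hat{C}_{i-1,j+1}$ contains no point of $P\setminus(\hat{P}_{i,j}\cup\hat{P}_{i-1,j+1})$. Claim~\ref{claim:empty triangle} then yields a point $p\in\hat{P}_{i,j}$ for which $\Delta pab$ meets no edge of $T'$, and Lemma~\ref{lemma:connecting point to star} lets us join $p$ to $T_{i-1,j+1}$ without crossing any edge of $T'$. Thus $T_{i-1,j+1}$ becomes connected to $T'$, and planarity is preserved.

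The s-adjacent case is where I expect the main difficulty, and it concentrates entirely in Case~2.2. After obtaining the rightmost admissible point $p$ via Claim~\ref{claim:empty triangle}, I split on whether the convex hull $H$ of $\hat{P}_{i,j+1}\cup\{p\}$ avoids all foreign points (Case~2.1, a direct connection by Lemma~\ref{lemma:connecting point to star}) or not (Case~2.2). In Case~2.2 I must first invoke the Stage~1 structure to argue that any foreign point of $H$ lies in $\hat{P}_{i-1,j}\cup\hat{P}_{i-1,j+1}$ or in $\hat{P}_{i+1,j}\cup\hat{P}_{i+1,j+1}$, and that exactly one cell of the relevant vertical pair is partitioned, so exactly one of the two corresponding point sets is empty. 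For each of the two resulting configurations I would then verify the convex-hull-emptiness statements asserted in the text, namely that the hull of $\hat{P}_{i,j+1}\cup\{q\}$ (respectively of $\hat{P}_{i-1,j+1}\cup\{q\}$ and of $\hat{P}_{i,j+1}\cup\{z\}$) contains no further foreign point; this is exactly what allows Lemma~\ref{lemma:connecting point to star} to be applied to each indirect connection. The most delicate points are (a) checking that the two edges added through $q$ and through $z$ cross neither each other nor any edge of $T'$, so the planarity invariant, and hence the applicability of Claim~\ref{claim:empty triangle}, is maintained; and (b) dealing with the secondary call that connects $T_{i-1,j}$ to $T_{i,j}$ by another application of Case~2.

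Finally I would settle termination of these secondary calls. Each secondary call is triggered to connect a specific, as-yet-unconnected tree to an already-connected one across a shared edge, and once connected a tree is never revisited; since there are only finitely many cells, the chain of calls set off by processing $\hat{C}_{i,j}$ is finite. Combining the three cases, by the time the processing of $\hat{C}_{i,j}$ together with its triggered calls completes, the tree of every cell adjacent to $\hat{C}_{i,j}$ has been joined to $T'$, and $T'$ remains planar at every step, which is exactly the assertion of the lemma.
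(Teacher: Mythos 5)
Your overall decomposition (d-adjacent neighbors via Case~1, s-adjacent neighbors via Case~2 with its two sub-cases, planarity maintained so that Claim~\ref{claim:empty triangle} stays applicable) is exactly the paper's route, and your treatment of Case~1 and Case~2.1 matches the paper's. The gap is in your final paragraph, which is supposed to carry the whole weight of Case~2.2. You assert that ``each secondary call is triggered to connect a specific, as-yet-unconnected tree to an already-connected one across a shared edge,'' and from that premise the conclusion would indeed be immediate. But the premise is false for precisely the configuration that makes Case~2.2 delicate: when $H$ meets $\hat{P}_{i-1,j}$, the edge added through $q$ joins $T_{i,j+1}$ to $T_{i-1,j}$, and $T_{i-1,j}$ may itself not yet be connected to $T'$. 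So the secondary connection can link two trees neither of which is in $T'$, and your remaining argument --- finitely many cells, hence finitely many calls --- only yields termination of the process, not the conclusion that every adjacent tree ends up attached to $T'$. A priori the chain of indirections could terminate leaving a cluster of mutually linked trees that never reaches $T_{i,j}$, which is exactly the failure mode the lemma must exclude.

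The paper closes this gap with an explicit (if terse) argument that the chain of indirections can only travel around the at most four s-adjacent neighbors of $\hat{C}_{i,j}$: connecting $T_{i,j+1}$ may defer to $T_{i-1,j}$, which in the next iteration is again attacked via Case~2 and either connects to $T_{i,j}$ directly or defers to $T_{i,j-1}$, and so on around the cell; since there are only four such neighbors, the chain must close, and when it closes the last tree in the chain connects either directly to $T_{i,j}$ or to a neighbor already connected to $T'$, which retroactively attaches the entire chain. To repair your proof you need some version of this: an argument that every maximal chain of deferred connections originating at $\hat{C}_{i,j}$ ends at a tree that is, or becomes, connected to $T_{i,j}$ --- a well-founded ``potential'' (e.g., the number of as-yet-unconnected neighbors of $\hat{C}_{i,j}$, together with the observation that the deferral target is always another neighbor of $\hat{C}_{i,j}$) rather than bare finiteness of the cell set.
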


\begin{proof}
Let $\hat{C}$ be an adjacent cell of $\hat{C}_{i,j}$.
We distinguish between two cases. \\
\textbf{Case~1:} $\hat{C}$ is a d-adjacent cell of $\hat{C}_{i,j}$. Assume, w.l.o.g., that $\hat{C} = \hat{C}_{i-1,j+1}$.
  Then, in Stage~2.2, Case 1, we connect between $T_{i,j}$ and $T_{i-1,j+1}$.\\
\textbf{Case~2:} $\hat{C}$ is an s-adjacent cell of $\hat{C}_{i,j}$. Assume, w.l.o.g., that $\hat{C} = \hat{C}_{i,j+1}$.
As described in Stage~2.2, we select a point $p \in \hat{C}_{i,j}$ and compute the convex hull $H$ of  $ \{p\}  \cup \hat{P}_{i,j+1}$. 
Then, we consider two cases.
In Case~2.1, when $H$ does not contain any point of $P \setminus (\hat{P}_{i,j}  \cup \hat{P}_{i,j+1}) $, 
we connect $T_{i,j}$ directly to $T_{i,j+1}$ (via $p$).
And, in Case~2.2, when $H$ contains a point of $P \setminus (\hat{P}_{i,j}  \cup \hat{P}_{i,j+1})$, we connect $T_{i,j+1}$ to $T_{i,j}$ via the tree $T_{i-1,j+1}$, in case that $H$ contains a point of $\hat{P}_{i-1,j+1}$ (or via the tree $T_{i+1,j+1}$, in case that $H$ contains a point of $\hat{P}_{i+1,j+1}$).\\
In the case that $H$ contains a point of $\hat{P}_{i-1,j}$ (symmetrically, $H$ contains a point of $\hat{P}_{i+1,j}$), we connect $T_{i,j+1}$ to $T_{i,j}$ via the tree $T_{i-1,j}$. 
If $T_{i-1,j}$ is already connected to $T_{i,j}$, then we are done. 
Otherwise, since $\hat{C}_{i-1,j}$ is an s-adjacent cell of $\hat{C}_{i,j}$, we will try to connect $T_{i-1,j}$ to $T_{i,j}$ in the next iteration in Stage~2.2 (by applying Case~2 once again). 
In the next iteration, either we connect $T_{i-1,j}$ to $T_{i,j}$ in one of the cases described above or we end up by connecting $T_{i,j-1}$ to $T_{i-1,j}$. 
In the latter case, if $T_{i,j-1}$ is already connected to $T_{i,j}$, then we are done. 
otherwise, $T_{i+1,j}$ is already connected to $T_{i,j}$. 
In this case, we connect $T_{i,j-1}$ to $T_{i,j}$ either directly or via $T_{i+1,j}$, and we are done.
\end{proof}

\begin{lemma} \label{lemma:two points connected}
Let $p$ and $q$ be two points of $P$, such that $p$ and $q$ are of different colors, $|pq| \le \lambda$ and $p$ belongs to $T'$. Then, $q$ also belongs to $T'$.
\end{lemma}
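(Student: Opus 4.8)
The plan is to deduce the statement from the block structure produced in Stage~1 together with the propagation guaranteed by Lemma~\ref{lemma:adjacent cells connected}. First I would record a simple structural fact about membership in $T'$: since $T'$ is the disjoint union of the cell-trees $T_{i,j}$ joined by the bridge edges added in Stage~2.2, and every bridge edge connects two points that are already vertices of previously built cell-trees, a point of $P$ lies in $T'$ if and only if the unique cell-tree containing it has been attached to $T'$. Writing $\hat{C}_{i,j}$ for the non-empty (extended) cell with $p\in\hat{P}_{i,j}$, the hypothesis $p\in T'$ therefore says precisely that $T_{i,j}$ is connected to $T'$.

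The next step is to pin down which extended cell contains $q$. Since all points of $P$ lie in the interiors of the grid cells, each of side length $3$, and $|pq|\le\lambda=1$, the original grid cells of $p$ and $q$ either coincide or are edge/corner adjacent; two original cells separated by more than one step lie at distance at least $3>1$ apart. Hence the original cell of $q$ lies in the $3\times 3$ block of grid cells centered at that of $p$. I would then follow the Stage~1 partition/merge rules and the Stage~1.3 lune-splitting to determine the extended cell $\hat{C}_{k,l}$ into which $q$ is finally placed, and argue that $\hat{C}_{k,l}$ is either $\hat{C}_{i,j}$ itself or an s- or d-adjacent cell of $\hat{C}_{i,j}$. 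The crucial leverage here is that $p$ and $q$ have different colors while every partitioned cell and every lune is \emph{monochromatic}: a configuration in which $q$ drifts two cells away from $p$ would require $p$ and $q$ to be absorbed through a common monochromatic merged region, which two points of different colors cannot both occupy.

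Granting this location of $q$, the conclusion follows directly. If $\hat{C}_{k,l}=\hat{C}_{i,j}$, then $q\in\hat{P}_{i,j}$ is a vertex of $T_{i,j}$, and $T_{i,j}\subseteq T'$ yields $q\in T'$. If instead $\hat{C}_{k,l}$ is adjacent to $\hat{C}_{i,j}$, then, because $T_{i,j}$ is already connected to $T'$, Lemma~\ref{lemma:adjacent cells connected} ensures that $T_{k,l}$ is connected to $T'$ as well; since $q\in\hat{P}_{k,l}$ is a vertex of $T_{k,l}$, again $q\in T'$.

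The hard part will be the geometric bookkeeping of the second step: verifying, over all placements of $q$'s cell inside the surrounding $3\times 3$ block and all combinations of which neighbors of $C_{i,j}$ were partitioned (together with the triangle reassignments of Stage~1.3), that $q$'s extended cell always meets $\hat{C}_{i,j}$ along a shared boundary. I expect the monochromaticity of lunes and partitioned cells, combined with the opposite colors of $p$ and $q$, to eliminate most potential cases and reduce the analysis to a few genuinely distinct configurations, each settled by inspecting the merging rules.
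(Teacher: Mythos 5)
Your proposal follows essentially the same route as the paper: reduce to showing that $q$ lies either in the same extended cell as $p$ or in an adjacent one, and then invoke Lemma~\ref{lemma:adjacent cells connected} (together with the fact that points sharing a cell are already joined in that cell's tree from Stage~2.1). The paper simply asserts the same-or-adjacent-cell claim directly from $|pq|\le\lambda$ without the case analysis you flag as the hard part, so your extra bookkeeping only makes explicit a step the authors leave implicit.
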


\begin{proof}
Since $|pq| \le \lambda$, either $p$ and $q$ are in the same cell or they are in adjacent cells. 
If they are in the same cell $\hat{C}_{i,j}$, then, after Stage~2.1, they are connected in $T_{i,j}$, and the lemma holds.
Otherwise, assume, w.l.o.g., that $p \in \hat{C}_{i,j}$ and $q \in \hat{C}$, where $\hat{C}$ is adjacent to $\hat{C}_{i,j}$. 
Then, after Stage~2.1, $p$ belongs to $T_{i,j}$ and $q$ belongs to $T_C$, the tree constructed in $\hat{C}$.
Since $T_{i,j}$ is part of $T'$ and $\hat{C}$ is adjacent to $\hat{C}_{i,j}$, by Lemma~\ref{lemma:adjacent cells connected}, $T_{i,j}$ is connected to $T_C$ and therefore $q$ belongs to $T'$.
\end{proof} 

\begin{lemma} \label{lemma:tree connected}
Let $T_{i,j}$ be a tree constructed in $\hat{C}_{i,j}$. Then, $T_{i,j}$ is connected to $T'$. 
\end{lemma}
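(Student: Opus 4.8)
The plan is to reduce the statement to a propagation argument along the auxiliary minimum-bottleneck bichromatic spanning tree $T$, rather than reasoning directly about the BFS traversal of cells. The key point is that $T$ is a spanning tree of all of $P$ whose edges are bichromatic and have length at most $\lambda$; these are exactly the two hypotheses needed to invoke Lemma~\ref{lemma:two points connected}. So I would use $T$ as a connectivity backbone that threads through the cells, and let membership in $T'$ propagate one $T$-edge at a time.

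Concretely, first fix the point $p_0$ belonging to the tree that seeds $T'$ at the start of Stage~2.2; by construction $p_0 \in T'$. Now let $T_{i,j}$ be an arbitrary (nonempty) tree, and pick any point $q \in \hat{P}_{i,j}$. Since $T$ spans $P$, there is a path $p_0 = w_0, w_1, \dots, w_\ell = q$ in $T$. Each edge $\{w_t, w_{t+1}\}$ joins two points of different colors (as $T$ is bichromatic) and has length at most $\lambda$ (as $\lambda$ is the bottleneck of $T$). I would then induct along this path: the base case $w_0 \in T'$ holds, and for the inductive step, assuming $w_t \in T'$, Lemma~\ref{lemma:two points connected} applied to the pair $(w_t, w_{t+1})$ yields $w_{t+1} \in T'$. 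Iterating up to $t=\ell$ gives $q \in T'$.

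To finish, I would observe that $q \in \hat{P}_{i,j}$ means $q$ belongs to $T_{i,j}$ after Stage~2.1, and $T_{i,j}$ is internally connected; hence $q \in T'$ forces the whole of $T_{i,j}$ to be a connected part of $T'$, which is precisely the claim. Since $q$ was arbitrary, this also implicitly confirms that $T'$ ends up spanning $P$.

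The substantive work has already been discharged by the earlier results: Lemma~\ref{lemma:two points connected} (and through it Lemma~\ref{lemma:adjacent cells connected}, Claim~\ref{claim:empty triangle}, and Lemma~\ref{lemma:connecting point to star}) is what guarantees that a single bichromatic pair at distance at most $\lambda$ cannot ``escape'' $T'$. Consequently, the only thing I need to check carefully here is that the hypotheses of Lemma~\ref{lemma:two points connected} genuinely hold at every step of the path, i.e.\ that consecutive vertices of $T$ differ in color and lie within $\lambda$ of each other, both immediate from $T$ being a bottleneck bichromatic spanning tree. The main (mild) obstacle is conceptual rather than computational: recognizing that one should lean on the connectivity of $T$ together with Lemma~\ref{lemma:two points connected}, instead of trying to argue that the s-adjacency/d-adjacency graph traversed by the BFS is itself connected, which would require matching the informal ``cells are connected'' property~(3) to the specific adjacency relation used in Stage~2.2.
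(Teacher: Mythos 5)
Your proposal is correct and follows essentially the same route as the paper: both arguments use the auxiliary bottleneck tree $T$ as a connectivity backbone, take a path in $T$ from a point already in $T'$ to a point of $\hat{P}_{i,j}$, and propagate membership in $T'$ across each bichromatic edge of length at most $\lambda$ via Lemma~\ref{lemma:two points connected}. The paper phrases this as a contradiction (picking the last point of the path that lies in $T'$) while you phrase it as a direct induction along the path, but the two are logically identical.
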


\begin{proof}
Assume by contradiction that $T_{i,j}$ is not connected to $T'$. Let $a$ be a point from $T'$ and let $b$ be a point from $T_{i,j}$. Since $T$ is a bottleneck bichromatic spanning tree of $P$, there is a path $\Pi$ between $a$ and $b$ in $T$. Let $p$ be the last point (from $a$) on $\Pi$ that belongs to $T'$, i.e., no point of $T'$ appears on the sub-path of $\Pi$ between $p$ and $b$. Since $b$ does not belong to $T'$, such a point $p$ exists.
Let $q$ be the point between $p$ and $b$ on $\Pi$ that is connected to $p$. By the selection of $p$, $q$ does not belong to $T'$. 
Since the bottleneck of $T$ is $\lambda$, we have $|pq| \le \lambda$. Therefore, by Lemma~\ref{lemma:two points connected}, $p$ and $q$ are connected in $T'$, which contradicts that $q$ does not belong to $T'$.    
\end{proof}

\begin{lemma} \label{lemma:planar}
$T'$ is planar. 
\end{lemma}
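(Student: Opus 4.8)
The plan is to prove planarity by induction on the order in which the cell-trees are attached to $T'$ during the BFS traversal of Stage~2.2, carrying the invariant that $T'$ is planar after each connection step. This is precisely the hypothesis under which Claim~\ref{claim:empty triangle} is stated, so the two results are really proved together: before a given step, the inductive hypothesis gives planarity of the current $T'$, which licenses Claim~\ref{claim:empty triangle} to produce the points used to add the new edges, after which I argue the enlarged $T'$ is still planar. Before running the induction I would record two structural reductions. First, each tree $T_{i,j}$ built in Stage~2.1 lies entirely inside its cell $\hat{C}_{i,j}$: every edge of $T_{i,j}$ joins two points of $\hat{P}_{i,j}$, and since $\hat{C}_{i,j}$ is convex the segment stays inside the cell. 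Second, the (extended) cells are interior-disjoint. Together these imply that internal edges of two distinct cell-trees can never cross, so the only crossings to worry about are those involving the connecting edges added in Stage~2.2.

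The base case is immediate: $T'$ is initialized to a single $T_{i,j}$, which is crossing-free by the cone construction of Stage~2.1. For the inductive step, assume $T'$ is planar just before a step that attaches a not-yet-connected tree $T_C$ in a cell $\hat{C}$ adjacent to $\hat{C}_{i,j}$. The whole argument then rests on three reusable ingredients applied to each newly added edge, which runs from a point of an already-attached cell to a point of $T_C$: Claim~\ref{claim:empty triangle} supplies a tail point $p$ (or $q$, $z$) whose triangle $\Delta p a b$ to the shared edge $\overline{ab}$ is crossed by no edge of $T'$, so the portion of the new edge inside the source cell is clear; the convex-hull emptiness condition verified in the text guarantees the edge meets no \emph{non-empty} third cell (it may only traverse regions empty of $P$), so by the second reduction it crosses no internal edge of a third tree; and Lemma~\ref{lemma:connecting point to star}, applied with the tail point as the external point, guarantees it crosses no edge of $T_C$. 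In Case~1 (d-adjacent) and Case~2.1 (s-adjacent with clean hull $H$) exactly one edge is added, and these three facts immediately re-establish planarity.

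The main obstacle is Case~2.2, where two edges are added at once and the connection is routed through an intermediate tree. Here I would argue non-crossing for each of the two edges separately using the same three ingredients — and note that the ``closest''/``bottommost'' choices of $q$ and $z$ are exactly what force the relevant convex hulls ($\hat{P}_{i,j+1}\cup\{q\}$, $\hat{P}_{i-1,j+1}\cup\{q\}$, $\hat{P}_{i,j+1}\cup\{z\}$) to be empty — and then verify that the two new edges do not cross \emph{each other}. For the mutual non-crossing (the point flagged parenthetically in the text), I would observe that in the subcase with $z$ the edge through $q$ joins $\hat{C}_{i,j}$ to $\hat{C}_{i-1,j+1}$ while the edge through $z$ joins $\hat{C}_{i-1,j+1}$ to $\hat{C}_{i,j+1}$, and that $z$ is chosen as the bottommost point of $\hat{P}_{i-1,j+1}$ in $\Delta p v_{tr} a$ with $\Delta p v_{tr} a \cap (\hat{P}_{i,j}\setminus\{p\})=\emptyset$; this separates the two supporting segments by the line through the shared boundary edge, so they cannot cross.

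Two bookkeeping points complete the induction. The deferral built into Lemma~\ref{lemma:adjacent cells connected} (where connecting $T_{i,j+1}$ is postponed and re-attempted via an s-adjacent cell in a later iteration) causes no difficulty, because the invariant concerns the current $T'$ irrespective of which iteration attached any particular subtree; so at whatever step the deferred edge is finally added, the planarity of the then-current $T'$ is still the inductive hypothesis and the three ingredients apply verbatim. And since each step adds at most two edges and each is shown to cross neither the prior $T'$ nor its partner, the invariant is preserved; as the BFS terminates with every non-empty cell visited, the final $T'$ is planar. Everything beyond Case~2.2 is thus a disciplined enumeration of the finitely many Stage~2.2 cases against the same three lemmas.
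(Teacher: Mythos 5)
Your proof is correct and follows essentially the same route as the paper's: an induction over the order in which cell-trees are attached in Stage~2.2, whose inductive step combines Claim~\ref{claim:empty triangle}, the convex-hull emptiness checks, and Lemma~\ref{lemma:connecting point to star}. You additionally spell out the mutual non-crossing of the two edges added in Case~2.2, which the paper only asserts parenthetically; this is a welcome elaboration of the same argument, not a different approach.
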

\begin{proof}
Each $T_{i,j}$ is planar. 
We start with $T'=T_{i,j}$, where $T_{i,j}$ is an arbitrary tree constructed in $\hat{C}_{i,j}$, and in each step, we extend $T'$ by connecting it to the trees corresponding to the cells adjacent to the current cell.    
We connect $T'$ to a ``new'' tree $T_{i,j}$ by picking a point $p$ in $T'$, such that the convex hull $H$ of $\{p\} \cup \hat{P}_{i,j}$ is empty of any other points and no edge of $T'$ crosses $H$. In Claim~\ref{claim:empty triangle}, we showed that such a point $p$ always exists.
Thus, connecting $p$ to any point of $T_{i,j}$ will not cross any other edge of $T'$ nor of any other tree.
Moreover, in Lemma~\ref{lemma:connecting point to star}, we show that we can always connect $p$ to $T_{i,j}$ without crossing any of the edges of $T_{i,j}$. 
Therefore, connecting $T'$ to $T_{i,j}$ does not produce any crossing. 
\end{proof} 

\begin{lemma} \label{lemma:bottleneck bound}
The bottleneck of $T'$ is at most $8\sqrt{2}\lambda$. 
\end{lemma}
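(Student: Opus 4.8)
The plan is to split the edges of $T'$ into two types and bound each separately. Every edge of $T'$ is either an \emph{intra-cell} edge, belonging to some tree $T_{i,j}$ built in Stage~2.1 inside a single extended cell $\hat{C}_{i,j}$, or an \emph{inter-cell} edge, namely one of the single linking edges added in Stage~2.2 to attach a tree $T_C$ of an adjacent cell to the growing tree $T'$. I would show that every intra-cell edge has length at most $5\sqrt{2}\lambda$ and every inter-cell edge has length at most $8\sqrt{2}\lambda$; since these are the only edges, the bottleneck of $T'$ is at most $8\sqrt{2}\lambda$.

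The common geometric ingredient is a box-containment fact. Assuming $\lambda=1$, each extended cell $\hat{C}_{i,j}$ is the original $3\times 3$ cell $C_{i,j}$ together with at most one absorbed trapezoid on each of its four sides (plus, after Stage~1.3, only triangular sub-pieces of these same trapezoids and sub-regions of corner sub-cells). Since each absorbed trapezoid reaches at most distance $1$ beyond the corresponding side of $C_{i,j}$, the entire cell $\hat{C}_{i,j}$ is contained in the $5\times 5$ axis-parallel box concentric with $C_{i,j}$. This immediately gives the intra-cell bound: any two points of $\hat{P}_{i,j}$ are at distance at most the box diagonal $5\sqrt{2}$ (this is Property~(2)), so every edge of $T_{i,j}$, which joins two points of $\hat{P}_{i,j}$, has length at most $5\sqrt{2}\lambda$.

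For the inter-cell edges I would first observe that every linking edge added in Stage~2.2 joins a point of $P$ in one cell to a point of $P$ in an \emph{s-adjacent} or a \emph{d-adjacent} cell. Indeed, Case~2.1 links an s-adjacent pair directly; Case~1 and the two edges of Case~2.2 link either a d-adjacent pair (the diagonal pairs $\hat{C}_{i,j},\hat{C}_{i-1,j+1}$ or $\hat{C}_{i,j+1},\hat{C}_{i-1,j}$, whose indices differ by one in each coordinate) or a horizontally/vertically s-adjacent pair. It then suffices to bound, for each kind of pair, the maximum distance between their two $5\times 5$ boxes. For an s-adjacent pair the two boxes overlap in a band of width $1$, and their farthest points differ by at most $8$ in one coordinate and $5$ in the other, giving distance at most $\sqrt{8^2+5^2}=\sqrt{89}<8\sqrt{2}$. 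For a d-adjacent pair the boxes are offset diagonally, so opposite corners differ by $8$ in each coordinate, giving distance at most $\sqrt{8^2+8^2}=8\sqrt{2}$. Hence every inter-cell edge has length at most $8\sqrt{2}\lambda$.

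Combining the two bounds, every edge of $T'$ has length at most $\max\{5\sqrt{2},\sqrt{89},8\sqrt{2}\}\lambda=8\sqrt{2}\lambda$, which is the claim. I expect the main obstacle to lie in the bookkeeping behind the third paragraph: one must verify that each of the edges introduced across the sub-cases of Stage~2.2 (including the ``detour'' connections routed through a third cell) really joins only s-adjacent or d-adjacent cells, and that the Stage~1.3 merges never push any part of an extended cell outside its concentric $5\times 5$ box. Once these containments are pinned down, the distance estimates themselves are elementary.
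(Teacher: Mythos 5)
Your proposal is correct and follows essentially the same route as the paper's proof: bound intra-cell edges by the diagonal of the $5\lambda\times 5\lambda$ box containing each extended cell, and bound the Stage~2.2 linking edges by the diagonals of the $8\lambda\times 5\lambda$ (s-adjacent) and $8\lambda\times 8\lambda$ (d-adjacent) boxes containing the relevant pairs of cells. The extra bookkeeping you flag (checking that Stage~1.3 merges stay inside the concentric box and that every Stage~2.2 edge joins an s- or d-adjacent pair) does check out and is left implicit in the paper as well.
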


\begin{proof}
Consider Figure~\ref{fig:bottleneckBound}. After Stage~1, each extended cell is contained in a square of size $5\lambda \times 5\lambda$, and hence the bottleneck of each tree constructed in Stage~2.1 is at most $5\sqrt{2}\lambda$. Moreover, every two d-adjacent cells are contained in a square of size $8\lambda \times 8\lambda$  and every two s-adjacent cells are contained in a square of size $8\lambda \times 5\lambda$. Thus, each edge added in Stage~2.2 is of length at most $8\sqrt{2}\lambda$. Therefore, each edge in $T'$ is of length at most $8\sqrt{2}\lambda$.  
\end{proof}

\old{
\begin{lemma} \label{lemma:polynomial}
The running time of the algorithm is polynomial. 
\end{lemma}

\begin{proof}
Computing $\lambda$ takes $O(n\log{n})$ time using the algorithm of Biniaz et al.~\cite{Biniaz18}. 

\end{proof}
}

The algorithm consists of two main stages, and each one of them can be implemented in polynomial time. Therefore, the total running time of the algorithm is polynomial. 
The following theorem summarizes the result of this section.

\begin{theorem}
Let $P$ be a set of $n$ red and blue points in the plane. One can compute in polynomial time a planar bichromatic spanning tree of $P$ of bottleneck at most $8\sqrt{2}$ times the bottleneck of an optimal bichromatic spanning tree of $P$.
\end{theorem}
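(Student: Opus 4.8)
The plan is to stitch together the constructions of Stages~1 and~2 with the lemmas proved above. Recall that $T$ is an optimal (possibly crossing) bichromatic spanning tree of $P$ with bottleneck $\lambda$, so the quantity we must beat is exactly $\lambda$. The algorithm first computes $T$ and $\lambda$ in $O(n\log n)$ time, then runs Stage~1 to refine the grid into convex bichromatic extended cells $\hat{C}_{i,j}$, builds a planar bichromatic tree $T_{i,j}$ inside each cell in Stage~2.1, and links them in the BFS-style traversal of Stage~2.2 to form $T'$. It then remains to verify that $T'$ is (i)~bichromatic, (ii)~a spanning tree of $P$, (iii)~planar, and (iv)~of bottleneck at most $8\sqrt{2}\lambda$, all within polynomial time.

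Properties (i), (iii), and (iv) I would dispatch directly. Bichromaticity holds because each $T_{i,j}$ is bichromatic by construction, and every connecting edge added in Stage~2.2 is produced by Lemma~\ref{lemma:connecting point to star}, which always attaches a point to a vertex of the opposite color. Planarity is precisely Lemma~\ref{lemma:planar}, whose proof relies on Claim~\ref{claim:empty triangle} to guarantee an empty convex hull at each linking step. The bottleneck bound is exactly Lemma~\ref{lemma:bottleneck bound}, which yields edges of length at most $8\sqrt{2}\lambda$; since $\lambda$ is the optimal bottleneck, this is the desired ratio. Moreover, because every planar bichromatic spanning tree is in particular a bichromatic spanning tree, $\lambda$ also lower-bounds the bottleneck of an optimal planar tree, so $T'$ is simultaneously an $8\sqrt{2}$-approximation for the planar problem.

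The crux, and the step I expect to be the main obstacle, is property (ii). Every point of $P$ lies in some bichromatic extended cell and is therefore spanned by that cell's tree, so it suffices to show that all the $T_{i,j}$ merge into a single connected, acyclic structure. Connectivity is the content of Lemma~\ref{lemma:tree connected}, and its argument is the essential idea of the whole section: it does not reason about $T'$ directly but borrows the auxiliary crossing tree $T$. If some $T_{i,j}$ were disconnected from $T'$, one takes a path in $T$ joining the two parts and finds the last edge $pq$ along it with $p\in T'$ and $q\notin T'$; since the bottleneck of $T$ is $\lambda$ we have $|pq|\le\lambda$, and Lemma~\ref{lemma:two points connected} (via Lemma~\ref{lemma:adjacent cells connected}) then forces $q\in T'$, a contradiction. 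Acyclicity follows from the traversal discipline of Stage~2.2: a connecting edge is added only to attach a tree that is not yet part of $T'$, and each such tree receives exactly one linking edge, so the number of linking edges is one less than the number of cell trees, which together with connectivity makes $T'$ a tree. Combining (i)--(iv) establishes that $T'$ is a planar bichromatic spanning tree of bottleneck at most $8\sqrt{2}\lambda$. Finally, each stage performs only a polynomial number of cell operations on a grid with $O(n)$ relevant nonempty cells, so the entire computation is polynomial, completing the proof.
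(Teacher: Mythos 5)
Your proposal is correct and follows essentially the same route as the paper: the theorem is proved by assembling Lemmas~\ref{lemma:connecting point to star}--\ref{lemma:bottleneck bound} exactly as you do, with connectivity derived from the optimal (crossing) tree $T$ via Lemmas~\ref{lemma:two points connected} and~\ref{lemma:tree connected}, planarity from Lemma~\ref{lemma:planar}, and the bound from Lemma~\ref{lemma:bottleneck bound}. Your explicit acyclicity count (each cell tree receives exactly one linking edge when first attached) is a small detail the paper leaves implicit, but it matches the intended argument.
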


\old{
\section{Conclusion}
In this paper, we consider the problem of computing bottleneck planar bichromatic spanning tree of a set of bi-colored points in the plane. We show that the problem is NP-hard and present the first constant-factor approximation algorithm for the problem.
}

\bibliographystyle{plainurl}

\end{document}